\newcommand{\pdfcolor}{blue}
\DeclareMathOperator*{\argmax}{argmax}
\DeclareMathOperator*{\argmin}{argmin}
\DeclareMathOperator{\marg}{marg}
\DeclareMathOperator{\supp}{supp}
\DeclareMathOperator{\Rev}{Rev}
\DeclareMathOperator{\rank}{rank}
\newcommand{\R}{\mathbb R}
\newcommand{\E}{\mathbb E}
\newcommand{\be}{\begin{equation}}
\newcommand{\ee}{\end{equation}}
\newcommand{\comment}[1]{}
\theoremstyle{plain}
\newtheorem{theorem}{Theorem}
\newtheorem{proposition}{Proposition}
\newtheorem{lemma}{Lemma}
\newtheorem*{theorem*}{Theorem}
\newtheorem*{proposition*}{Proposition}
\newtheorem*{example*}{Example}
\newtheorem*{claim*}{Claim}
\newtheorem*{lemma*}{Lemma}
\theoremstyle{definition}
\newtheorem{definition}{Definition}
\theoremstyle{remark}
\begin{document}

\title{Robust Mechanisms Under Common Valuation}

\author{Songzi Du\footnote{I thank Gabriel Carroll, Vitor Farinha Luz, Ben Golub, Wei Li, and Michael Ostrovsky for comments and discussions.  Email: \href{mailto:songzid@sfu.ca}{songzid@sfu.ca}} \\ Simon Fraser University} 

\date{November 16, 2016}

\maketitle

\begin{abstract}
We study robust mechanisms to sell a common-value good.  We assume that the mechanism designer knows the prior distribution of the buyers' common value but is unsure of the buyers' information structure about the common value.  We use linear programming duality to derive mechanisms that guarantee a good revenue among all information structures and all equilibria.  Our mechanism maximizes the revenue guarantee when there is one buyer.  As the number of buyers tends to infinity, the revenue guarantee of our mechanism converges to the full surplus.
\end{abstract}

\newpage

\section{Introduction}

In this paper we study robust mechanism design for selling a common-value good.  A robust mechanism is one that works well under a variety of circumstances, in particular under weak assumptions about participants' information structure.  The goal of robust mechanism design is to reduce the ``base of common knowledge required to conduct useful analyses of practical problems,'' as envisioned by \citet*{Wilson1987}.

The literature on robust mechanism design has so far largely focused on private value settings.\footnote{See \citet*{ChungEly}, \citet*{Brooks2013}, \citet*{FrankelAER}, \citet*{CarrollAER, Carroll_AdverseSelection}, \citet*{YamashitaRestud, YamashitaRevenueGuarantee}, \citet*{Carrasco_etal}, \citet*{ChenLi}, \citet*{HartlineRoughgarden_2016}, among others; we follow this literature by adopting a max-min approach for robust mechanisms.}  Common value is of course important in many real-life markets (particularly financial markets) and has a long tradition in auction theory.  Robustness with respect to information structure is especially relevant with common value, since there is no canonical information structure in this setting.  In practice it is hard to pinpoint exactly what is a signal (or a set of signals) for a buyer and to quantify the correlation between the signal and the common value, not to mention specifying the joint distribution of signals for all buyers that correctly captures their beliefs and higher order beliefs about the common value.

We suppose that the prior distribution of the common value is known.  We want to design a mechanism that guarantees a good revenue for every information structure consistent with the prior and every equilibrium from the information structure.  Such mechanism is clearly useful when the designer does not know the nature of information held by the strategic buyers.  Moreover, such mechanism can be adopted as the default trading protocol that works well in a variety of circumstances with minimal customization.  We are inspired by a recent paper of \citet*{BergemannBrooksMorris} which works out the revenue guarantee of the first price auction (among other results).

We come up with a mechanism that guarantees a better revenue than the first price auction.  The mechanism is simple to implement in practice and can be described as follows.  Suppose there is one common-value good to sell, and $I \geq 1$ buyers with quasi-linear utility.  Let the message space for each buyer $i$ be the interval $[0,1]$.  We think of a message $z_i \in [0,1]$ as the demand of buyer $i$.  Buyer $i$ gets the good with probability $q_i(z_i, z_{-i})$ ($q_i$ could also be buyer $i$'s quantity of allocation if the good is divisible) and pays $P_i(z_i, z_{-i})$.  If $z_1 \geq z_2 \geq \cdots \geq z_I$, then
\be
\label{eq.mechanism}
q_i(z_i, z_{-i}) = \sum_{j=i}^{I-1} \frac{z_j - z_{j+1}}{j} + \frac{z_I}{I}, \qquad P_i(z_i, z_{-i}) = X (\exp(z_i/A)-1),
\ee
and analogously for any other ordering of $(z_1, z_2, \ldots, z_I)$.  That is, the lowest buyer gets $1/I$ of his demand, the second lowest buyers gets that plus $1/(I-1)$ of the difference between his and the lowest demand, and so on.  Thus, the total probability/quantity of allocation is equal to the highest demand.  Moreover, each buyer's payment depends only on his demand and is independent of his final allocation, like an all-pay auction.  Finally, $A > 0$ and $X>0$ in Equation \eqref{eq.mechanism} are constants that are optimized for the prior distribution of value; intuitively, the constant $A$ scales the demand, while the constant $X$ scales the payment.  We call the mechanism in Equation \eqref{eq.mechanism} the \emph{exponential price mechanism}.

We prove that the exponential price mechanism gives the optimal revenue guarantee when there is one buyer ($I=1$).  In this case we know sharp upper bound on the revenue guarantee.  For example, if the prior is the uniform distribution on $[0,1]$, then the designer can guarantee (among all information structures and all equilibria) a revenue of at most $1/4$: fix any mechanism, there is the private-value information structure, and its equilibrium revenue must be less than $1/4$ which is obtained by the private-value optimal mechanism (a posted price of $1/2$).  \citet*{RoeslerSzentes} study the optimal information structure for a buyer when the seller is best responding to this information structure.  Roesler and Szentes's optimal information structure gives a subtle upper bound on the seller's revenue guarantee.   For example, when the prior is the uniform distribution on $[0,1]$, the seller can guarantee a revenue of at most $0.2036$.  We prove that the exponential price mechanism exactly guarantees the Roesler-Szentes upper bound for any prior distribution when $I=1$.\footnote{In other words, \citet*{RoeslerSzentes} characterize for one buyer:
\[
\min_{\text{info.\ structure} } \qquad \max_{\text{mechanism, equilibrium}} \quad \text{Revenue},
\]
while we characterize:
\[
\max_{\text{mechanism}} \qquad \min_{\text{info.\ structure, equilibrium}} \quad \text{Revenue},
\]
and show it is equal to their min-max value.  Equilibrium here is a mapping from signals of the information structure to messages in the mechanism, such that there is no incentive to deviate.}  In contrast, any posted price does not give the optimal revenue guarantee; for example, when the prior is the uniform $[0,1]$ distribution, the optimal posted price guarantees a revenue of only $1/8$.\footnote{\label{footnote:postedprice}When the prior is the uniform $[0,1]$ distribution, a posted price of $p \leq 1/2$ guarantees a revenue of $(1-2p)p$: suppose the buyer's information about his value is the partition $\{[0, 2p), [2p, 1] \}$, an equilibrium is to buy at price $p$ if and only if $[2p, 1]$ is realized.}


As the number $I$ of buyers increases, the exponential price mechanism guarantees a better revenue, as we numerically demonstrate in \autoref{fig.PiStar} and \autoref{tab.rev}.  We prove that as the number of buyers tends to infinity, the revenue guarantee of the exponential price mechanism (over all information structures and all equilibria) becomes arbitrarily close to the full surplus (the expectation of the common value).  Since the full surplus is an upper bound on the equilibrium revenue of every mechanism, the exponential price mechanism achieves the optimal revenue guarantee as $I \rightarrow \infty$.  This guarantee of full surplus extraction in the limit is not obtained by the first price auction (as shown by \citet*{Engelbrecht-WiggansMilgromWeber} and \citet*{BergemannBrooksMorris}), second price auction\footnote{For a second price auction with a reserve price (potentially zero), suppose there is one informed buyer who knows the common value $v$, and $I-1$ uninformed buyers who only knows the prior.  The following is an equilibrium: the informed buyer truthfully bids $v$, and all uninformed buyers bid 0.  Clearly, this equilibrium does not obtain the full surplus in revenue as $I \rightarrow \infty$.}, all-pay auction\footnote{The minimum-revenue information structure in \citet*{BergemannBrooksMorris} for the first price auction also fails to extract the full surplus in revenue for an all-pay auction as $I \rightarrow \infty$.}, or with a posted price (consider the case when all $I$ buyers have symmetric information about the common value).  And unlike the mechanisms of \citet*{CremerMcLean_1985Ecta, CremerMcLean_1988Ecta}, our mechanism is detail free and depends only on the support of the prior distribution, and extracts the full surplus for all information structures in the limit.

To study the revenue guarantee of mechanisms we introduce a duality approach which could be useful for other problems.  We want to minimize the expected revenue over the set of information structures and equilibria for a given mechanism, and then maximize the minimized revenue over the set of mechanisms.  \citet*{BergemannMorrisTE} give the powerful insight that we can combine information structure and equilibrium into a single entity called \emph{Bayes Correlated Equilibrium}, which is a joint distribution over actions and value subject to obedience and consistency constraints.  Minimizing revenue over Bayes correlated equilibria for any fixed mechanism is a linear programming problem, and we can equivalently solve the dual problem which is a maximization problem over the dual variables of constraints associated with Bayes correlated equilibrium.  These dual variables have the interpretation as transition rates for a continuous-time Markov process over the message space, similar to the transition probabilities in \citet*{Myerson1997} as dual variables for complete-information correlated equilibrium.  Moreover, we can combine the maximization over the dual variables with the maximization over the mechanism design variables, so we have a single maximization problem which is equivalent to but more tractable than the original max-min problem.



\section{Model}

\subsubsection*{Information}
The mechanism designer has a single good to sell.  Let $\mathcal{I} = \{1, 2, \ldots, I\}$ be a finite set of buyers, $I \geq 1$. The buyers have a common value $v \in V = \{0, \nu, 2\nu, \ldots, 1 \}$ for the good and have quasi-linear utility, where $\nu>0$ is a constant.  Let $p \in \Delta(V)$ be the prior distribution of common value; the prior $p$ is known by the designer as well as by the buyers.  (The designer only knows the prior $p$ about the value.)  

Each buyer $i$ \emph{may} possess some additional information $s_i \in S_i$ about the common value beyond the prior, where $S_i$ is a finite set of signals.  We have $\tilde{p} \in \Delta(V \times \prod_{i \in \mathcal{I}} S_i)$ such that $\marg_V \tilde{p}=p$,\footnote{Let $\marg_V \tilde{p}$ be the marginal distribution of $\tilde{p}$ over $V$.} so buyer $i$'s information about the common value is informed by $\tilde{p}(\, \, \cdot \mid s_i)$.  As discussed in the introduction, the information structure $(S_i, \tilde{p})_{i \in \mathcal{I}}$ is \emph{not} known by the designer.


\subsubsection*{Mechanism}

A mechanism is a set of allocation rules $q_i : M \rightarrow [0, 1]$ and payment rules $P_i : M \rightarrow \R$ satisfying $\sum_{i \in \mathcal{I}} q_i(m) \leq 1$, where $M_i$ is the message space of buyer $i$ and is a finite set, and $M = \prod_{i \in \mathcal{I}} M_i$ the space of message profiles.  A mechanism defines a game in which the buyers simultaneously submit messages and have utility
\be
\label{eq.Ui}
U_i(v, m) = v \cdot q_i(m) - P_i(m).
\ee
The allocation $q_i(m)$ can be interpreted as the probability of getting the good in the case of an indivisible good, and as the share of the good in the case of a divisible good.

We assume that a mechanism always has an opt-out option for each buyer $i$: there exists a message $m_i \equiv 0 \in M_i$ such that $q_i(0, m_{-i}) = P_i(0, m_{-i}) = 0$ for every $m_{-i} \in M_{-i}$.  

In this paper we focus on \emph{symmetric} mechanism, which satisfies
\begin{align}
\label{eq.symmetricMech}
q_i(m_i', m'_{-i}) &= q_1(m_1 = m'_i, m_{-1} = m'_{-i}) \equiv q(m_i', m'_{-i}) \\
P_i(m'_i, m'_{-i}) &= P_1(m_1 = m'_i, m_{-1} = m'_{-i}) \equiv P(m_i', m'_{-i}) \nonumber
\end{align}
for every $i \in \mathcal{I}$ and $m' \in M$.  By $m_{-1} = m'_{-i}$ we mean that $m_{-1}$ and $m'_{-i}$ have the same elements but not necessarily the same ordering of elements; for example we may have $m_{-1}=(a, b, c)$ and $m'_{-i}=(c, b, a)$.  Intuitively, in a symmetric mechanism every buyer is treated in the same way.  For a symmetric mechanism we abbreviate $q_1(m)$ to $q(m)$ and $P_1(m)$ to $P(m)$.

\subsubsection*{Equilibrium}

Given a mechanism $(q_i, P_i)_{i \in \mathcal{I}}$ and an information structure $(S_i, \tilde{p})_{i \in \mathcal{I}}$, we have a game of incomplete information.  A \emph{Bayes Nash Equilibrium} (BNE) of the game is defined by strategy $\sigma_i : S_i \rightarrow \Delta(M_i)$ for each buyer $i$ such that for every $s_i \in S_i$, the support of $\sigma_i(s_i)$ is among the best responses to others' strategies:
\be
\supp \sigma_i(s_i) \subseteq \argmax_{m_i \in M_i} \sum_{(v, s_{-i}) \in V \times S_{-i}} U_i(v, (m_i, \sigma_{-i}(s_{-i})) ) \tilde{p}(v, s_{-i} \mid s_i),
\ee
where $U_i(v, (m_i, \sigma_{-i}(s_{-i})) )$ is linearly extended from Equation \eqref{eq.Ui}.

The ex ante distribution $\mu \in \Delta(V \times M)$ generated by any BNE $(\sigma_i)_{i \in \mathcal{I}}$ of any information structure  $(S_i, \tilde{p})_{i \in \mathcal{I}}$ satisfies the following two conditions:
\begin{align}
 & \sum_{m \in M} \mu(v, m) = p(v), \quad v \in V, \label{eq.bce.consistency} \tag{Consistency} \\
 & \sum_{(v, m_{-i}) \in V \times M_{-i}} \mu(v, m) \left( U_i(v, (m_i, m_{-i})) - U_i(v, (m_i', m_{-i})) \right) \geq 0, \quad i \in \mathcal{I}, (m_i, m_i') \in M_i \times M_i. \label{eq.bce.obedience} \tag{Obedience}
\end{align}

A distribution $\mu \in \Delta(V \times M)$ that satisfies the above two conditions is called a \emph{Bayes Correlated Equilibrium} (BCE) of the mechanism $(q_i, P_i)_{i \in \mathcal{I}}$.  For any BCE $\mu$, there exists an information structure and a BNE of that information structure that generates $\mu$.  See \citet*{BergemannMorrisTE} for more details.

For notational brevity, we sometimes omit the set to which a summation variable belongs when it is obvious; for example, summing over $m$ means summing over $m \in M$.

\subsubsection*{Designer's problem}


The mechanism designer wants to solve:
\begin{align}
& \sup_{(q_i, P_i)_{i \in \mathcal{I}}} \; \min_{\mu \in \Delta(V \times M)} \; \sum_{(v, m)} \sum_{i} \mu(v, m) P_i(m) \label{eq.maxmin.problem} \\
& \text{such that $\mu$ is a BCE of $(q_i, P_i)_{i \in \mathcal{I}}$}. \nonumber
\end{align}
\begin{definition}
A mechanism \emph{guarantees} a revenue $R$ if every BCE of this mechanism has an expected revenue larger than or equal to $R$.
\end{definition}

\section{Main Results}

Our main results are a class of mechanisms that give good revenue guarantee.  Consider a symmetric mechanism with $k$ messages besides the opt-out message: $M_i=\{0, 1, \ldots, k \}$, for every buyer $i \in \mathcal{I}$.  The allocation $q(m_1, m_{-1})$ is given by:
\begin{align}
& q(0, m_{-1}) = 0,  \qquad \qquad \qquad \qquad \qquad \qquad \qquad \qquad \qquad \qquad \qquad \qquad m_{-1} \in M_{-1},
\label{eq.I.allocation} \\
& q(m_1+1, m_{-1}) - q(m_1, m_{-1}) = \left( \frac{1}{|\rank(m_1, m_{-1})|} \sum_{j \in \rank(m_1, m_{-1})}\frac{1}{j} \right) \cdot \frac{1}{k},  \quad 0 \leq m_1 \leq k-1, \nonumber
\end{align}
where $\rank(m_1, m_{-1}) \subseteq \{1, 2, \ldots, I\}$ is the set of ranks (from the top) of $m_1$ in $(m_1, m_2, \ldots, m_I)$; for example, $\rank(20, 10, 20, 40, 30) = \{3, 4\}$, because $m_1 = 20$ and $m_3 =20$ are tied for the third and the fourth place in this list; and $\rank(20, 10, 30, 40, 30) = \{4\}$ because in this list $m_1=20$ is unambiguously ranked fourth, even though there is a tie for the second and the third rank.  We think of a message $m_i$ as the demand of a fraction $m_i / k$ of the good; the allocation in Equation \eqref{eq.I.allocation} is increasing with the demand at a rate equal to the reciprocal of the demand's rank: a rate of 1 for the highest demand, of $1/2$ for the second highest demand, of $1/3$ for the third highest demand, and so on.  Moreover, we break tie in a symmetric way and randomize over all feasible ranks, in the case when $|\rank(m_1, m_{-1})| > 1$.  It is easy to check that Equation \eqref{eq.I.allocation} uniquely defines an allocation function (i.e., the feasibility condition is always satisfied); the total amount of allocation is at most $\max(m_1, m_2, \ldots, m_I)/k$.

The payment of our mechanism is:
\be
\label{eq.I.payment}
P(m_1, m_{-1}) = X \left( \left( 1+\frac{1}{a} \right)^{m_1} - 1 \right),
\ee
where $X > 0$ and $a > 0$ are constants that are optimized for a given prior distribution $p$.  That is, the payment of every buyer depends only on his message and is independent of his final allocation.

As $k \rightarrow \infty$ and $a = A \cdot k$, the mechanism from Equations \eqref{eq.I.allocation} and \eqref{eq.I.payment} converges to \eqref{eq.mechanism}, where we reparametrize $m_i \in \{0, 1, \ldots, k\}$ to $z_i \equiv m_i/k \in [0, 1]$, where $z_i$ is buyer $i$'s demand.  Thus, we abuse the terminology and refer to the mechanism from Equations \eqref{eq.I.allocation} and \eqref{eq.I.payment} as the exponential price mechanism as well.

Intuitively, the exponential price mechanism tries to be egalitarian and allocate some quantity of the good to every buyer.   Since the exponential payment is a convex function of quantity, it makes sense to split the good among all buyers.   Of course, a buyer with a higher demand gets more quantity because such buyer is paying more.  If $z_1 > z_2 > \cdots > z_I$, then buyer $i$ gets exactly $(z_i - z_{i+1})/i$ more than the allocation of buyer $i+1$; we have the factor $1/i$ because the quantity $(z_i - z_{i+1})/i$ is also acquired by all buyer $j > i+1$, and by definition there are $i$ of them.  The intuition for the exponential functional form of the payment rule is best illustrated when there is a single buyer and is presented in \autoref{sec.intuition.onebuyer}.

When there is a single buyer ($I=1$), the exponential price mechanism becomes:
\be
\label{eq.mechanism.I1}
q(m_1) = m_1/k, \qquad P(m_1)= X \left( \left( 1+\frac{1}{a} \right)^{m_1} - 1 \right), \qquad m_1 \in \{0, 1, \ldots, k\},
\ee
since $\rank(m_1) = \{1\}$ by definition.  In fact, this mechanism achieves the optimal revenue guarantee:

\begin{theorem}
\label{prop.onebuyer}
Suppose there is one buyer, and as $\nu \rightarrow 0$ the prior $p$ converges to a distribution with a positive density.  There exist constants $A > 0$ and $X > 0$ such that the exponential price mechanism with $a = A \cdot k$ and the given $X$ achieves the optimal revenue guarantee (i.e., solution to Problem \eqref{eq.maxmin.problem}) as $k \rightarrow \infty$ and $\nu \rightarrow 0$.
\end{theorem}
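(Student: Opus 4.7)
The plan is to sandwich the designer's max-min revenue at the Roesler--Szentes value $R^\ast(p)$, without invoking strong LP duality. The upper bound $\min_G E_G[\mathrm{Rev}] \le R^\ast$ for every mechanism is immediate from \citet*{RoeslerSzentes}: they exhibit an information structure $G^\ast\in\mathrm{MPC}(p)$ whose Myerson-optimal revenue equals $R^\ast$, and every mechanism's revenue against $G^\ast$ is at most the Myerson optimum for $G^\ast$.

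For the matching lower bound, first reduce the BCE minimization to a problem on $\mathrm{MPC}(p)$. With one buyer the objective $\bar v(m)\,z - X(e^{z/A}-1)$ is strictly concave in $z$, so in any BCE each in-support message $m$ is the unique maximizer, determined by $w:=\bar v(m)=E_\mu[v\mid m]$ via $z^\ast(w)=A\ln(wA/X)$ (interior) or a corner; substituting back into the payment gives
\[
R_{A,X}(w)\;=\;\max\bigl\{0,\;\min\{Aw-X,\,X(e^{1/A}-1)\}\bigr\}.
\]
By Blackwell and \citet*{BergemannMorrisTE} the feasible distributions $G$ of $\bar v(m)$ are exactly $\mathrm{MPC}(p)$, so the mechanism's revenue guarantee equals $\min_{G\in\mathrm{MPC}(p)} E_G[R_{A,X}]$. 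Since concave functions reverse mean-preserving contractions ($E_G[\phi]\ge E_p[\phi]$ for $G\in\mathrm{MPC}(p)$), weak duality gives
\[
\min_{G\in\mathrm{MPC}(p)} E_G[R_{A,X}] \;\ge\; E_p\bigl[R^{cc}_{A,X}\bigr]
\]
for any concave $R^{cc}_{A,X}\le R_{A,X}$. I take $R^{cc}_{A,X}(w)=\min\{Aw-X,\,X(e^{1/A}-1)\}$, which is concave and $\le R_{A,X}$; the fact that this is the pointwise largest such choice follows from applying the chord inequality at $w=X/A$ to any concave minorant, forcing $\phi(0)\le -X$.

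The core of the argument, and the main obstacle, is choosing $(A,X)>0$ so that $E_p[R^{cc}_{A,X}]=R^\ast$. I align the zero-threshold $X/A$ of $R^{cc}_{A,X}$ with the lower support endpoint $\underline p=R^\ast$ of $G^\ast$ and determine $A$ from the first-order condition in the maximization of $E_p[R^{cc}_{A,X}]$ over $(A,X)$. Direct integration against $G^\ast$ (which has density $R^\ast/w^2$ on $(\underline p,\bar p)$ plus a point mass $R^\ast/\bar p$ at $\bar p$) then shows that the contributions from the flat region $w\ge(X/A)e^{1/A}$ and from the point mass at $\bar p$ combine so that the $\bar p$-dependent terms cancel exactly, yielding $E_{G^\ast}[R_{A,X}]=R^\ast$; a parallel direct calculation against $p$ gives $E_p[R^{cc}_{A,X}]=R^\ast$ at the same $(A,X)$. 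The sandwich
\[
R^\ast \;=\; E_p\bigl[R^{cc}_{A,X}\bigr] \;\le\; \min_{G} E_G[R_{A,X}] \;\le\; E_{G^\ast}[R_{A,X}] \;=\; R^\ast
\]
then forces equality throughout, and the passage from the discrete mechanism \eqref{eq.mechanism.I1} to its continuous-$z$ limit as $k\to\infty$ and $\nu\to 0$ is a routine approximation under the positive-density assumption on $p$.
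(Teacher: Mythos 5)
Your argument is correct in outline, and it is a genuinely different route from the paper's formal proof. The paper establishes the lower bound $\Pi_1^*$ via the linear-programming dual of the BCE-minimization problem (\autoref{prop.kmessages}, built on the construction in \autoref{sec.kmessages}), then matches the resulting first-order conditions to the Roesler--Szentes conditions. You instead exploit the special structure of the one-buyer problem directly: with strict concavity of $wz - X(e^{z/A}-1)$ in $z$, the equilibrium revenue against any information structure with unbiased-signal distribution $G$ equals $E_G[R_{A,X}]$, the feasible $G$'s are exactly $\mathrm{MPC}(p)$, and a concave minorant plus the mean-preserving-contraction inequality yields the lower bound without any LP machinery. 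This is essentially the paper's ``intuition'' subsection (\autoref{sec.intuition.onebuyer}) elevated to a self-contained proof, and it is more transparent for $I=1$ --- though unlike the duality approach it does not generalize to $I \geq 2$, where the best-response correspondence is no longer a scalar single-crossing object.

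Two caveats. First, your claim that $R^{cc}_{A,X}(w)=\min\{Aw-X,\,X(e^{1/A}-1)\}$ is the \emph{pointwise largest} concave minorant of $R_{A,X}$ is false (e.g.\ $\phi(w)=\min\{w,c\}$ for small $c>0$ is a concave minorant with $\phi(0)=0>-X$), and the chord-inequality argument you sketch does not force $\phi(0)\le -X$; fortunately nothing in the sandwich requires maximality, only that $R^{cc}_{A,X}$ is concave and $\le R_{A,X}$. Second, the heart of the proof --- that at the chosen $(A,X)$ one has $E_p[R^{cc}_{A,X}]=R^\ast=E_{G^\ast}[R_{A,X}]$ --- is asserted as ``direct integration'' but not carried out. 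This is exactly the content of the paper's Equations \eqref{eq.onebuyer.solution.2} and \eqref{eq.RS.conditions} and the verification that they coincide under $s^\ast=Xe^{1/A}/A$, $\pi^\ast=X/A$; without it the sandwich endpoints are not pinned to the same number, so the argument is incomplete as written. Also note that your prescription (fix $X/A=R^\ast$, then impose one FOC) is two conditions in two unknowns, but for the sandwich to close you additionally need the \emph{other} FOC to hold automatically --- this is precisely what the equivalence with the Roesler--Szentes binding conditions delivers, and it should be checked rather than taken for granted.
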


That is, for any $\epsilon > 0$, there exists $\bar{\nu}$ and $\bar{k}$ such that for any $\nu \leq \bar{\nu}$ and $k \geq \bar{k}$, the exponential price mechanism with $a = A \cdot k$ and the given $X$ guarantees a revenue within $\epsilon$ of the best possible from Problem \eqref{eq.maxmin.problem}.

We compute the optimal revenue guarantee of \autoref{prop.onebuyer} for various prior distributions in \autoref{tab.rev} (page \pageref{tab.rev}).

Our second result states the exponential price mechanism guarantees in expected revenue the full surplus (the expected common value) as the number of buyers tends to infinity.  In this sense the mechanism is asymptotically optimal.

\begin{theorem}
\label{prop.Iinf}
Let $a = \frac{k}{\log(I)}$ and $X = \frac{1}{2 I \log(I)}$.  The exponential price mechanism guarantees a revenue of $\sum_{v} v \cdot p(v)$ as $k \rightarrow \infty$ and $I \rightarrow \infty$.
\end{theorem}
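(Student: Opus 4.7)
The plan is to attack the minimum via the linear programming duality flagged in the introduction. For a fixed mechanism, computing the worst BCE revenue is an LP in $\mu$; its dual associates a free multiplier $\lambda(v)$ with each consistency constraint and a nonnegative multiplier $\alpha_i(m_i, m_i') \geq 0$ with each obedience constraint, and maximizes $\sum_v \lambda(v) p(v)$ subject to the per-$(v,m)$ dual feasibility constraint
\[
\lambda(v) \leq \sum_i P(m_i) + \sum_i \sum_{m_i'} \alpha_i(m_i, m_i') \bigl\{ v [q_i(m_i', m_{-i}) - q_i(m)] + P(m_i) - P(m_i') \bigr\}.
\]
By weak duality, any feasible $(\alpha, \lambda)$ yields a lower bound on the revenue guarantee, so it suffices to exhibit one pair that is nearly tight.

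I would set $\alpha_i(m_i, m_i+1) = a$ for every buyer $i$ and every $m_i \in \{0, 1, \ldots, k-1\}$, and $\alpha_i(m_i, m_i') = 0$ for all other transitions. Two identities tailored to the mechanism then make the RHS of the dual constraint collapse. First, the exponential-price identity $a[P(m_i+1) - P(m_i)] = P(m_i) + X$ (immediate from $\Delta P(m_i+1) = (1+1/a)\Delta P(m_i)$) implies the payment-related piece contributes $-X$ per non-top buyer. Second, the allocation identity
\[
\sum_{i : m_i < k} \bigl[ q_i(m_i+1, m_{-i}) - q_i(m_i, m_{-i}) \bigr] = \frac{H_I - H_K}{k},
\]
where $K = |\{i : m_i = k\}|$ and $H_n = \sum_{j=1}^n 1/j$, holds because the allocation gives rate $1/r$ at each rank $r$ and splits ties symmetrically, so the sum telescopes over the ranks $K+1, \ldots, I$ irrespective of the particular pattern of ties among non-top buyers. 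Together these reduce the huge family of dual constraints indexed by $(v,m)$ to the $K$-indexed family
\[
\lambda(v) \leq X(K\xi - I) + \frac{(H_I - H_K) v}{\log I}, \qquad K \in \{0, 1, \ldots, I\},
\]
where $\xi = (1 + 1/a)^k$.

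I then set $\lambda(v)$ equal to the minimum of the RHS over $K$, which is dual-feasible by construction and hence lower-bounds the revenue. With $X = 1/(2I\log I)$ and $\xi \to I$ as $k \to \infty$, the RHS is asymptotically $[(K-1)/2 + (H_I - H_K) v]/\log I$. Treating $K$ as continuous, the minimum sits near $K = 2v \in [0,2]$, so it is attained at $K = 0$ (value $vH_I - 1/2$) or $K = 1$ (value $v(H_I - 1)$). Since $H_I/\log I = 1 + O(1/\log I)$, both cases give $\lambda(v) = v + O(1/\log I)$ uniformly in $v \in [0,1]$, so $\sum_v \lambda(v) p(v) \to \sum_v v p(v)$ in the double limit $k, I \to \infty$.

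The step I expect to be the main obstacle is the verification of the allocation identity above in the presence of ties; this is the engineered ``miracle'' that the mechanism's rank-reciprocal allocation and exponential payment are jointly tuned to produce, and checking it cleanly requires bookkeeping across tie groups. Once the two identities are in hand, the remaining asymptotics are routine, and combining the dual lower bound with the trivial upper bound revenue $\leq E[v]$ (which follows from obedience to the opt-out deviation together with $\sum_i q_i(m) \leq 1$) pins the revenue guarantee down to $\sum_v v p(v)$ in the limit.
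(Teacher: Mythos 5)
Your proposal is correct and follows essentially the same route as the paper: fix $\alpha_i(m_i{+}1 \mid m_i)=a$, use the payment difference equation $P(m_i)-a(P(m_i{+}1)-P(m_i))=-X$ and the rank-reciprocal telescoping identity $\sum_{i:m_i<k}[q_i(m_i{+}1,m_{-i})-q_i(m_i,m_{-i})]=(H_I-H_K)/k$ to collapse the virtual revenue to a function of $K$ alone, then set $\gamma(v)$ to the $K$-minimum and send $k,I\to\infty$. The paper packages the intermediate lower bound as $\Pi^*_I$ in Proposition~\ref{prop.kmessages} and identifies the minimizing $n$ via the explicit crossover points $v(n)$, but the substance (including the observation that only $n\in\{0,1\}$ matter since $v(1)\to 1$) matches your argument.
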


That is, for any $\epsilon > 0$, there exists $\bar{I}$ and $\bar{k}$ such that for any $I \geq \bar{I}$ and $k \geq \bar{k}$, the exponential price mechanism with $a = \frac{k}{\log(I)}$ and $X = \frac{1}{2 I \log(I)}$ guarantees a revenue within $\epsilon$ of $\sum_{v} v \cdot p(v)$. 

The values of $a$ and $X$ in \autoref{prop.Iinf} depend only on the support of the prior (which is in $[0,1]$) and is independent of the other details of the prior.  Thus, the convergence of the mechanism's revenue guarantee to the full surplus holds for every prior supported on $[0,1]$.

We illustrate the revenue guarantee of the exponential price mechanism as a function of the number of buyers in \autoref{fig.PiStar}; we also compare with the first price auction with the reserve price chosen to maximize the revenue guarantee \citep*{BergemannBrooksMorris}.  For this figure we take the prior to be the uniform distribution on $[0,1]$ and $\nu \rightarrow 0$.  We see that the revenue guarantee of the exponential price mechanism is fairly close to the full surplus of 0.5 when there are 20 buyers.

\begin{figure}[ht!]
\centering
\includegraphics[scale=1.5]{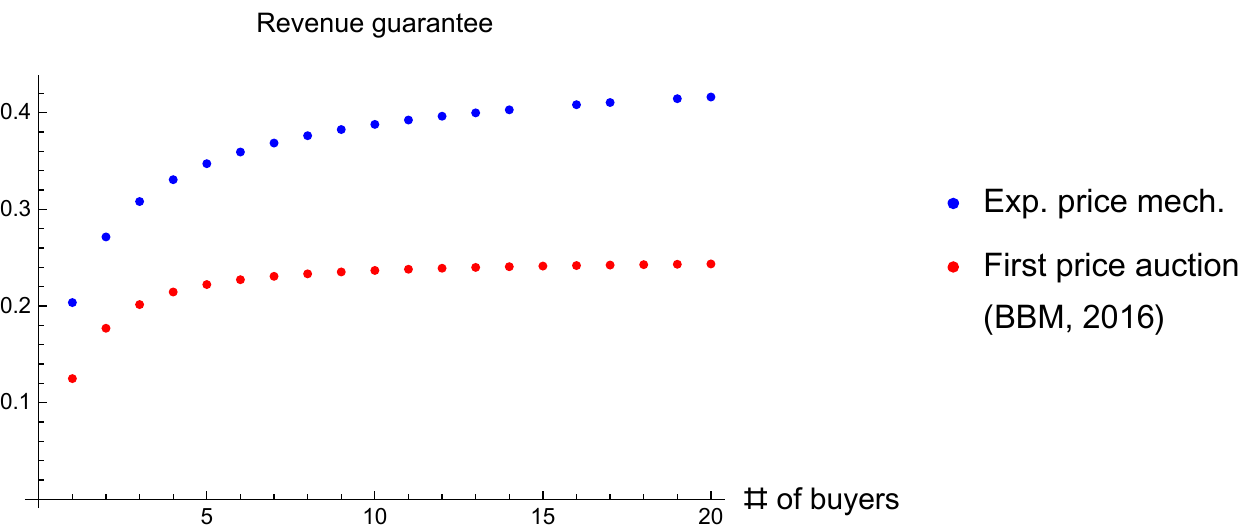}
\caption{\label{fig.PiStar}Revenue guarantees of the exponential price mechanism and of the first price auction with the optimal reserve price.}
\end{figure}


\section{Duality Approach to Robust Mechanism}
\label{sec.minRev.BCE}


To prove \autoref{prop.onebuyer} and \autoref{prop.Iinf}, we introduce a duality approach.  For a given mechanism $(q_i, P_i)_{i \in \mathcal{I}}$, the BCE that minimizes the expected revenue is found by the following problem:
\begin{align}
& \min_{\mu} \; \sum_{(v, m)} \sum_{i} P_i(m) \mu(v, m) \label{eq.minRev.primal} \\
& \text{subject to: } \nonumber \\
& \sum_{(v, m_{-i})} (U_i(v, m) - U_i(v, (m'_i, m_{-i}))) \mu(v, m) \geq 0, \quad i \in \mathcal{I}, (m_i, m_i') \in M_i \times M_i, \nonumber \\
& \sum_{m} \mu(v, m) = p(v), \quad v \in V, \nonumber \\
& \mu(v, m) \geq 0, \quad v \in V, m \in M, \nonumber
\end{align}
where $U_i$ is the utility function defined by Equation \eqref{eq.Ui}.

The dual problem to Problem \eqref{eq.minRev.primal} is:
\begin{align}
& \max_{(\alpha_i, \gamma)_{i \in \mathcal{I}}} \; \sum_{v} p(v) \gamma(v) \label{eq.minRev.dual} \\
& \text{subject to: } \nonumber \\
& \gamma(v) + \sum_i \sum_{m_i'} [U_i(v, m) - U_i(v, (m_i', m_{-i}))] \alpha_i(m_i' \mid m_i) \leq  \sum_{i} P_i(m), \quad v \in V, m \in M, \nonumber \\
& \alpha_i(m_i' \mid m_i) \geq 0,  \quad i \in \mathcal{I}, (m_i, m_i') \in M_i \times M_i, \nonumber
\end{align}
where $\alpha_i(m_i' \mid m_i)$ is the dual variable for the obedience constraint of not playing $m_i'$ when ``recommended'' to play $m_i$ in \eqref{eq.minRev.primal}, and $\gamma(v)$ is the dual variable for the consistency constraint of $\sum_{m} \mu(v, m) = p(v)$.  By the linear programming duality theorem, Problems \eqref{eq.minRev.primal} and \eqref{eq.minRev.dual} have the same optimal value; their solutions are characterized by the complementary slackness conditions.

Mechanism designer's problem in \eqref{eq.maxmin.problem} can be written as:
\begin{align}
 & \sup_{(P_i, q_i, \alpha_i, \gamma)_{i \in \mathcal{I}}} \; \sum_{v} p(v) \gamma(v) \label{eq.maxmin.problem.2} \\
& \text{subject to: } \nonumber \\
\gamma(v) \leq & \sum_{i} P_i(m) + \sum_i \sum_{m_i'} [v (q_i(m_i', m_{-i}) - q_i(m) ) - P_i(m_i', m_{-i}) + P_i(m) ] \alpha_i(m_i' \mid m_i), \quad v \in V, m \in M, \nonumber \\
& q_i(m) \geq 0, \quad \sum_{i'} q_{i'}(m) \leq 1, \quad q_i(0, m_{-i}) = P_i(0, m_{-i}) = 0, \qquad i \in \mathcal{I}, m \in M \nonumber \\ 
&  \alpha_i(m_i' \mid m_i) \geq 0,  \qquad (m_i, m_i') \in M_i \times M_i, i \in \mathcal{I}, \nonumber
\end{align}
where we label the opt-out message as $0 \in M_i$.

The advantage of problem \eqref{eq.maxmin.problem.2} over the equivalent problem \eqref{eq.maxmin.problem} is that we work with a maximization problem instead of a max-min problem.  Moreover, we work with $(\alpha_i)_{i \in \mathcal{I}}$, where each $\alpha_i$ has $|M_i \times M_i|$ dimensions, instead of $\mu$ which has $|V \times \prod_{i \in \mathcal{I}} M_i|$ dimensions; the reduction in dimensions is significant if $|V|$ is large.  Lastly, if we find a tuple $(\alpha_i, q_i, P_i)_{i \in \mathcal{I}}$ that satisfies the constraints of Problem \eqref{eq.maxmin.problem.2}, then the value of \eqref{eq.maxmin.problem.2} under such $(\alpha_i, q_i, P_i)_{i \in \mathcal{I}}$ is by definition a lower bound on the optimal revenue guarantee.  On the other hand, finding a feasible tuple $(\mu, q_i, P_i)_{i \in \mathcal{I}}$ for Problem \eqref{eq.maxmin.problem} (i.e., $\mu$ is a BCE of $(q_i, P_i)_{i \in \mathcal{I}}$) does not yield by itself any conclusion about the revenue guarantee, since there may exist another BCE $\mu'$ of $(q_i, P_i)_{i \in \mathcal{I}}$ with a lower revenue than $\mu$.

Problem \eqref{eq.maxmin.problem.2} can be summarized as:
\be
\label{eq.maxmin.problem.3}
\max_{(q_i, P_i, \alpha_i)_{i \in \mathcal{I}}} \; \sum_{v} p(v) \cdot \min_{m} \Rev(v, m),
\ee
subject to the feasibility constraints, where
\be
 \label{eq.maxmin.rev}
 \Rev(v, m) \equiv \sum_{i} \left( P_i(m) + \sum_{m_i'} ( U_i(v, m_i', m_{-i}) - U_i(v, m) ) \alpha_i(m_i' \mid m_i) \right).
 \ee
 We call $\Rev(v, m)$ the \emph{virtual revenue}.  Since $U_i(v,m)$ is a linear function of $v$, so is $\Rev(v,m)$ for any fixed $m$.  We interpret $\alpha_i(m_i' \mid m_i)$ as buyer $i$'s rate of deviation from message $m_i$ to $m_i'$, and $\Rev(v,m)$ as the revenue generated by the message profile $m$, \emph{plus} the incentive to deviate from $m$ given value $v$ and rates of deviation $(\alpha_i)_{i\in \mathcal{I}}$.  By minimizing $\Rev(v, m)$ over $m$, we are ignoring message profile $m$ that either (1) has a large revenue, or (2) there is a large incentive to deviate from $m$ by a buyer.  Intuitively, (1) and (2) combines to give equilibrium message profile with minimum revenue.


Problem \eqref{eq.maxmin.problem.2} is bounded above by $\sum_{v} v \cdot p(v)$, by the following lemma:

 \begin{lemma}
 \label{lemma.upperbound}
 For every $v \in V$, we have:
 \be
 \label{eq.FullSurplus}
\min_m \Rev(v,m) \leq v.
 \ee
 \end{lemma}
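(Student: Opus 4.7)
The plan is, for each fixed $v \in V$, to exhibit a distribution over message profiles whose expected virtual revenue is at most $v$; since the minimum is at most the expectation, this yields $\min_m \Rev(v,m) \leq v$.

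Fix $v$ and consider the complete-information normal-form game in which player $i$ has action set $M_i$ and payoff $U_i(v, m) = v\,q_i(m) - P_i(m)$. Because $\mathcal{I}$ and each $M_i$ are finite, Nash's existence theorem supplies a (possibly mixed) equilibrium $\sigma^* = (\sigma_1^*, \ldots, \sigma_I^*) \in \prod_i \Delta(M_i)$. I will evaluate $\E_{m \sim \sigma^*}[\Rev(v,m)]$ and show it is at most $v$.

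For the deviation part of $\Rev(v,m)$, using the product form of $\sigma^*$ and the fact that $\alpha_i(m_i'\mid m_i)$ depends only on $m_i$, the expectation equals
\[
\sum_{i,\,m_i,\,m_i'} \sigma_i^*(m_i)\,\alpha_i(m_i'\mid m_i)\,\Bigl(\E_{\sigma^*_{-i}}[U_i(v, m_i', m_{-i})] - \E_{\sigma^*_{-i}}[U_i(v, m_i, m_{-i})]\Bigr).
\]
Each summand is non-positive: for $m_i\in\supp(\sigma_i^*)$ the bracket is $\leq 0$ by the best-response property of $\sigma_i^*$, and for $m_i\notin\supp(\sigma_i^*)$ the weight $\sigma_i^*(m_i)$ vanishes; combined with $\alpha_i \geq 0$, the deviation contribution has non-positive expectation. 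For the payment part, rewrite $\sum_i P_i(m) = v\sum_i q_i(m) - \sum_i U_i(v,m)$. Feasibility gives $\sum_i q_i(m) \leq 1$, while availability of the opt-out message together with the best-response property yields $\E_{\sigma^*}[U_i(v,m)] \geq \E_{\sigma^*_{-i}}[U_i(v,0,m_{-i})] = 0$ for each $i$. Hence $\E_{\sigma^*}\bigl[\sum_i P_i(m)\bigr] \leq v$. Adding the two bounds gives $\E_{\sigma^*}[\Rev(v,m)] \leq v$, which implies the claim.

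The only real insight required is recognizing that the dual variables $\alpha_i(m_i'\mid m_i)$ behave exactly like unilateral deviation rates from $m_i$ to $m_i'$, so averaging them against a complete-information Nash equilibrium at value $v$ annihilates the deviation contribution via the best-response inequality. The payment bound is then a standard individual-rationality argument, so I do not anticipate any substantial technical obstacle beyond making the probabilistic bookkeeping precise.
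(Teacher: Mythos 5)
Your proof is correct. It differs from the paper's route in a way worth noting: the paper invokes LP strong duality, writes the dual as a minimization over complete-information correlated equilibria at value $v$, and then applies the individual-rationality plus feasibility bound to show any feasible $\mu$ has expected revenue at most $v$. You instead avoid the LP machinery altogether, pick a specific complete-information Nash equilibrium $\sigma^*$ (whose existence is guaranteed by Nash's theorem), and directly compute $\E_{\sigma^*}[\Rev(v,m)]$, killing the deviation terms by the best-response property and bounding the payment terms by the same IR-plus-feasibility argument. In effect you are re-deriving the weak-duality inequality for this particular LP, using a Nash equilibrium as the test measure rather than appealing to the general duality theorem plus the full class of correlated equilibria. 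Both proofs hinge on the same two observations — (i) a complete-information equilibrium annihilates the $\alpha$-weighted deviation contributions, and (ii) the opt-out message forces $\E[\sum_i P_i(m)] \le v\,\E[\sum_i q_i(m)] \le v$ — so the economic content is identical. Your version is more self-contained and elementary (no strong duality, no need to identify the dual as the CE polytope); the paper's version makes the structural link between Problem \eqref{eq.CompleteInfo.problem} and the revenue-minimizing correlated equilibrium explicit, which is thematically consistent with the rest of the duality section.
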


 \begin{proof}

Fix an arbitrary $v \in V$.  Consider the problem:
 \begin{align}
& \max_{\gamma, (\alpha_i)_{i \in \mathcal{I}}} \; \gamma \label{eq.CompleteInfo.problem}\\
& \text{subject to: } \nonumber \\
& \gamma + \sum_i \sum_{m_i'} (U_i(v, m) - U_i(v, (m_i', m_{-i}))) \alpha_i(m_i' \mid m_i) \leq  \sum_{i} P_i(m), \quad m \in M, \nonumber \\
& \alpha_i(m_i' \mid m_i) \geq 0,  \quad i \in \mathcal{I}, (m_i, m_i') \in M_i \times M_i. \nonumber
 \end{align}

The dual to the above problem is:
 \begin{align}
& \min_{\mu} \; \sum_{m} \mu(m) \sum_i P_i(m) \\
& \text{subject to: } \nonumber \\
& \sum_{m_{-i}} \mu(m) (U_i(v, m) - U_i(v, (m_i', m_{-i}))) \geq 0, \quad i \in \mathcal{I}, (m_i, m_i') \in M_i \times M_i, \nonumber \\
& \sum_{m} \mu(m) = 1, \nonumber \\
& \mu(m) \geq 0,  \quad m \in M, \nonumber
 \end{align}
which is minimizing the revenue over \emph{complete-information} correlated equilibria $\mu$ (for the fixed $v$).  For any $\mu$ satisfying the constraints, we have $\sum_{m_{-i}} \mu(m) U_i(v, m) = \sum_{m_{-i}} \mu(m) (v q_i(m) - P_i(m)) \geq 0$ for every $i \in \mathcal{I}$ and $m_i \in M_i$ because of the presence of the opt-out message $0 \in M_i$.  Therefore, $\sum_{m} \mu(m) \sum_i (v q_i(m) - P_i(m)) \geq 0$, and $\sum_{m} \mu(m) \sum_i P_i(m) \leq \sum_m \mu(m) \sum_i v q_i(m) \leq v$.  Thus the optimal solution of \eqref{eq.CompleteInfo.problem} is bounded above by $v$.
 \end{proof}

\subsection{A Lower Bound}
\label{sec.kmessages}
We work with symmetric mechanism $q(m) \equiv q_1(m)$ and $P(m) \equiv P_1(m)$ (cf.\ Equation \eqref{eq.symmetricMech}) and symmetric $\alpha(m_i' \mid m_i) \equiv \alpha_i(m_i' \mid m_i)$.

Instead of directly solving Problem \eqref{eq.maxmin.problem.3}, we make some educated guess on $(q_i, P_i, \alpha_i)$ to get a lower bound on the maximum value of Problem \eqref{eq.maxmin.problem.3}.  Suppose $M_i = \{0, 1, \ldots, k \}$ for every buyer $i \in \mathcal{I}$, where $q(0, m_{-1}) = 0 = P(0, m_{-1})$ for every $m_{-1} \in M_{-1}$.

We focus on
\be
\alpha(j' \mid j) =
\begin{cases}
a & j' = j+1 \\
0 & j' \neq j+1
\end{cases}, \quad (j, j') \in \{0, 1, \ldots, k \}^2.
\label{eq.kmessages.condAlpha}
\ee
Condition \eqref{eq.kmessages.condAlpha} says that the local obedience constraint in BCE is binding: if the above $\alpha$ satisfies the complementarity slackness condition with a BCE $\mu$, then a buyer is indifferent between messages $j$ and $j+1$ if he is ``recommended'' to submit $j$ in the BCE $\mu$.  This is a discrete analogue of the first order condition at $j$.

Condition \eqref{eq.kmessages.condAlpha} implies that there are two kinds of messages: the ``interior'' message $j \in \{0, 1, \ldots, k-1\}$, and the ``boundary'' message $j = k$.  Thus there are $I+1$ kinds of message profiles $m \in M = \{0, 1, \ldots, k\}^I$, depending on the number of boundary messages in $m$.  For $0 \leq n \leq I$, define the class of message profiles:
\be
M(n) = \{ m \in M : |\{i \in \mathcal{I} : m_i = k\}| = n \}.
\ee
The sets $M(n)$, $0 \leq n \leq I$, form a partition of $M$.  Our second assumption is that
\be
\label{eq.kmessages.condRev}
\Rev(v, m) = \Rev(v, m') \; \forall v \in V, \qquad \text{if $m$ and $m'$ belong to the same $M(n)$}.
\ee

Condition \eqref{eq.kmessages.condRev} attemps to make $\Rev(v, m)$ over $m$ as redundant as possible, to minimize the number of items inside the $\min$ operator in Equation \eqref{eq.maxmin.problem.3}.

We now go to the exponential price mechanism defined by Equations \eqref{eq.I.allocation} and \eqref{eq.I.payment}.  Clearly, if there are $n$ boundary messages in a message profile $m$, then the rest (the interior messages) have ranks among $\{n+1, n+2, \ldots, I\}$, and by Equation \eqref{eq.I.allocation} we have:
\be
\sum_{i : m_i < k} q(m_i+1, m_{-i}) - q(m_i, m_{-i}) = \frac{1}{k} \sum_{j=n+1}^I \frac{1}{j}.
\ee
Moreover, for an interior $m_i$ we have:
\be
P(m_i, m_{-i}) - a (P(m_i+1, m_{-i})-P(m_i, m_{-i})) = -X
\ee
by Equation \eqref{eq.I.payment}.\footnote{\label{footnote.P.diffEq} In fact, Equation \eqref{eq.I.payment} is the solution to the difference equation
\[
P(m_i, m_{-i}) - a (P(m_i+1, m_{-i})-P(m_i, m_{-i})) = -X
\]
 for $m_i \in \{0, 1, \ldots, k-1\}$, with the initial condition of $P(0, m_{-i})=0$.}  Therefore, under Condition \eqref{eq.kmessages.condAlpha} we have:
\be
\Rev(v, m) = \frac{a \, v}{k} \sum_{j=n+1}^I \frac{1}{j} + n X ((1+1/a)^k - 1) - (I-n) X, \quad \text{if $m \in M(n)$.}
\ee
Thus, Condition \eqref{eq.kmessages.condRev} is satisfied, and we have the following lower bound on the maximum value of Problem \eqref{eq.maxmin.problem.3}:
\be
\label{eq.kmessages.problem}
\Pi^*_I \equiv \sup_{k \geq 1, \, a \geq 0, \, X} \left( \sum_v p(v) \cdot \min_{0 \leq n \leq I} \left( \frac{a \, v}{k} \sum_{j=n+1}^I \frac{1}{j} + n X ((1+1/a)^k - 1) - (I-n) X \right)  \right).
\ee


\begin{proposition}
\label{prop.kmessages}
The exponential price mechanism guarantees a revenue of $\Pi^*_I$ defined in \eqref{eq.kmessages.problem}.  
\end{proposition}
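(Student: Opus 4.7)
The plan is to construct explicit dual-feasible variables for Problem \eqref{eq.maxmin.problem.2} whose objective matches the bracketed expression in \eqref{eq.kmessages.problem} at any fixed $(k,a,X)$. Weak linear programming duality then lower-bounds the exponential price mechanism's BCE-minimum revenue by that expression, and taking the supremum over $(k,a,X)$ yields $\Pi^*_I$. Concretely, fix $(k,a,X)$, take $\alpha$ as in Equation \eqref{eq.kmessages.condAlpha} -- a ladder of unit-step deviation rates $\alpha(j+1\mid j)=a$ for $0\leq j\leq k-1$, zero otherwise -- and set $\gamma(v) := \min_{m\in M}\Rev(v,m)$ using the formula \eqref{eq.maxmin.rev}. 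This tuple is dual-feasible: $\alpha\geq 0$, and $\gamma(v)\leq \Rev(v,m)$ for every $(v,m)$ by construction, which is exactly the dual constraint in \eqref{eq.maxmin.problem.2}.

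The substantive task is to evaluate $\gamma(v)$ in closed form. I would split the coordinates of $m$ into \emph{boundary} buyers ($m_i=k$) and \emph{interior} buyers ($m_i<k$), say with $n$ of the former. A boundary buyer has no outgoing deviation rate, so contributes exactly $P(k)=X[(1+1/a)^k-1]$ to $\Rev(v,m)$. For an interior buyer only the up-one deviation carries positive rate, so the contribution is $P(m_i) + a\{v\cdot\Delta q_i - [P(m_i+1)-P(m_i)]\}$ with $\Delta q_i:=q(m_i+1,m_{-i})-q(m_i,m_{-i})$. The exponential payment in \eqref{eq.I.payment} is precisely the solution of the difference equation $P(m_i) - a[P(m_i+1)-P(m_i)] = -X$ (cf.\ footnote~\ref{footnote.P.diffEq}), so the interior contribution collapses to $-X + a v\cdot\Delta q_i$.

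What remains is the combinatorial identity $\sum_{i:m_i<k}\Delta q_i = \frac{1}{k}\sum_{j=n+1}^I 1/j$ for every $m$ with exactly $n$ boundary entries. I expect this to be the main obstacle because of tie handling: when $c$ interior buyers share a message value, the set $\rank(m_i,m_{-i})$ in \eqref{eq.I.allocation} has size $c$, and I must verify that the symmetric average $\frac{1}{c}\sum_{j\in\rank} 1/j$ is engineered so that summing across those $c$ tied buyers regenerates the full harmonic sum over their shared ranks. Granting this, the $I-n$ interior buyers collectively occupy ranks $\{n+1,\dots,I\}$ (the $n$ boundary buyers being top-ranked), which gives the claimed identity. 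Substituting back shows $\Rev(v,m) = \frac{av}{k}\sum_{j=n+1}^I 1/j + nX[(1+1/a)^k-1] - (I-n)X$, depending on $m$ only through $n$. Hence $\gamma(v)$ is the minimum of this expression over $0\leq n\leq I$; averaging against $p$ and optimizing over $(k,a,X)$ delivers the guarantee $\Pi^*_I$.
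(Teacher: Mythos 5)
Your proposal is correct and follows essentially the same argument the paper uses: the paper's "proof" of \autoref{prop.kmessages} simply points back to the construction preceding it (the ladder $\alpha$ of \eqref{eq.kmessages.condAlpha}, the interior/boundary split via $M(n)$, the telescoping of the payment difference equation, and the rank-sum identity $\sum_{i:m_i<k}\Delta q_i=\frac{1}{k}\sum_{j=n+1}^I 1/j$), all of which you reproduce, just with the dual-feasibility of $(\alpha,\gamma)$ and the tie-handling in the rank identity spelled out explicitly rather than asserted.
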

\begin{proof}
The proof is given by the construction above.
\end{proof}

We prove \autoref{prop.onebuyer} and \autoref{prop.Iinf} by studying $\Pi^*_1$ and $\lim_{I \rightarrow \infty} \Pi^*_I$.  In \autoref{fig.PiStar} (page \pageref{fig.PiStar}) we plot $\Pi_I^*$ for the uniform $[0,1]$ distribution, as $\nu \rightarrow 0$.


\subsection{Proof of \autoref{prop.onebuyer}}

Let $I=1$.  Suppose as $\nu \rightarrow 0$, the prior $p$ converges to a distribution with density $\rho$, where $\rho: [0,1] \rightarrow [0, \infty)$ is positive almost everywhere.

 To prove \autoref{prop.onebuyer}, we need to discuss a revelant result in \citet*{RoeslerSzentes}.  \citet*{RoeslerSzentes} study the optimal information structure for the buyer (and the worst for the seller) when the seller best responds to the information structure.  Such information structure has the following cumulative distribution function for the signals:
\be
\label{eq.RoeslerSzentes.signals}
G_{\pi}^B(s) =
\begin{cases}
1 & s \geq B \\
1 - \pi/s & s \in [\pi, B) \\
0 & s < \pi
\end{cases},
\ee
where $s \in [0,1]$ is an unbiased signal of the buyer for his value ($\E[v \mid s] = s$), $0 < \pi \leq B$ are two free parameters, and there is an atom of size $\pi/B$ at $s=B$.  If the buyer has this distribution of unbiased signals and observes the realization of the signal, then the seller is clearly indifferent between every posted price in $[\pi, B]$ and has a revenue of $\pi$ from the optimal mechanism (which is a posted price).\footnote{Intuitively, if the seller has a strict incentive over the posted price, then we can slightly change the buyer's information structure to lower the seller's optimal revenue and to increase the buyer's surplus, while preserving the seller's best response in posted price.}  Thus, $\pi$ is an upper bound on the seller's revenue guarantee.

Given the density $\rho(v)$, $G_{\pi}^B(s)$ is a distribution of an unbiased signal on $v$ if and only if $\rho$ is a mean-preserving spread of $G_{\pi}^B(s)$, which holds if and only if:
\begin{align}
\label{eq.logB}
& \int_{0}^{1} v \, \rho(v) \, dv = \int_{0}^{1} s \, dG_\pi^B(s) = \pi + \pi \log B - \pi \log \pi \\
\label{eq.SOSD}
& \min_{s \in [\pi, B]} F(s, \pi) \geq 0, \text{ where } \\
& F(s, \pi) \equiv  \int_{s'=0}^{s} \int_{v=0}^{s'} \rho(v) \,dv \, ds' - \int_{0}^{s} G_\pi^B(s') \, ds' = \int_{s'=0}^{s} \int_{v=0}^{s'} \rho(v) \, dv \, ds' - (s - \pi - \pi \log s  + \pi \log \pi), \nonumber
\end{align}
i.e., $G_\pi^B$ has the same mean as $\rho$ and second-order stochastically dominates $\rho$.  Let $B=B(\pi)$ be defined from $\pi$ by Equation \eqref{eq.logB}.

\citet*{RoeslerSzentes} prove that the best information structure for the buyer (and the worst for the seller) when the seller best responds to the information structure is $G_{\pi^*}^{B^*}$, where $\pi^*$ is the smallest $\pi$ such that $\min_{s \in [\pi, B(\pi)]} F(s, \pi) \geq 0$, and $B^* \equiv B(\pi^*)$; that is, $\pi^*$ is the smallest $\pi$ such that $\rho$ is a mean-preserving spread of $G_{\pi}^{B(\pi)}(s)$.  For our purpose, by making $\pi$ small we tighten the upper bound on the seller's revenue guarantee.

We now show that the exponential price mechanism can obtain the upper bound $\pi^*$.  In the case of one buyer, Problem \eqref{eq.kmessages.problem} simplifies to:
\be
\label{eq.onebuyer.problem}
\Pi^*_1 \equiv \max_{k \geq 1, \, a \geq 0, \, X} \; \sum_{v} \min \left( \frac{a \, v}{k} - X, \, X \left( \left( 1+\frac{1}{a} \right)^k - 1 \right) \right) \, p(v),
\ee


Set $a = A \cdot k$.  As $\nu \rightarrow 0$ and $k \rightarrow \infty$, we have
\be
\label{eq.onebuyer.uniform.revenue}
\sum_{v} \min( a v / k , X ( 1 + 1/a )^k ) \, p(v) - X \longrightarrow \Pi_1 \equiv \int_{0}^{1} \min( A v, X \exp(1/A) ) \, \rho(v) dv -X.
\ee

We maximize $\Pi$ over $A$ and $X$.  Suppose $\frac{X \exp(1/A)}{A} \in [0,1]$, the first order condition is:
\begin{align}
\label{eq.onebuyer.solution.2}
& \frac{\partial \Pi_1}{\partial X} = \int_{\frac{X \exp(1/A)}{A}}^{1} \exp(1/A) \, \rho(v) \, dv - 1 = 0, \\
& \frac{\partial \Pi_1}{\partial A} = \int_{0}^{\frac{X \exp(1/A)}{A}} v \, \rho(v) \, dv - \int_{\frac{X \exp(1/A)}{A}}^{1} \frac{X \exp(1/A)}{A^2} \, \rho(v) \, dv = 0. \nonumber
\end{align}
If the above first order condition holds and $\frac{X \exp(1/A)}{A} \in [0,1]$, then we have $\Pi_1 = X/A$.

Going back to the construction of Roesler-Szentes, let $s^*$ be an arbitrary selection from $\argmin_{s \in [\pi^*, B^*]} F(s, \pi^*)$.  Since $\min_{s \in [\pi, B(\pi)]} F(s, \pi)$ is a continuous function of $\pi$, we have $F(s^*, \pi^*) = 0$.  Moreover, $s^*$ must be interior\footnote{If $B^* < 1$, we must have $F(B^*, \pi^*)> 0$, for otherwise we would have $\int_{0}^1 G_{\pi^*}^{B^*}(s) \, ds > \int_{s=0}^1 \int_{v=0}^s \rho(v) \, dv \, ds$, which would contradict the fact that $G_{\pi^*}^{B^*}$ has the same mean as $p$.  If $B^* = 1$, then we have $\frac{\partial F}{\partial s}(B^*, \pi^*) = \frac{\pi^*}{B^*} > 0$.  In any case $s^* \neq B^*$.  Since $F(\pi^*, \pi^*) > 0$, we also have $s^* \neq \pi^*$.}, so we have $\frac{\partial F}{\partial s}(s^*, \pi^*) = 0$.


Therefore, we have (the first line is $\frac{\partial F}{\partial s}(s^*, \pi^*) = 0$, and the second line is $F(s^*, \pi^*) = 0$):
\begin{align}
\label{eq.RS.conditions}
& \int_{v=0}^{s^*} \rho(v) \, dv - 1 + \pi^*/s^* = 0, \\
& \int_{s=0}^{s^*} \int_{v=0}^s \rho(v) \, dv \, ds - (s^* - \pi^* - \pi^* \log s^*  + \pi^* \log \pi^*) = - \int_{0}^{s^*} v \, \rho(v) \, dv + \pi^* \log s^* - \pi^* \log \pi^* = 0, \nonumber
\end{align}
where in the second equality of the second line we use integration by parts and substitute in the first line.  Clearly, there exist unique $A>0$ and $X>0$ such that $s^* = X \exp(1/A)/A$ and $\pi^* = X/A$.  (We have $s^* < B^* < 1$.) Then the above equations become:
\be
\int_{\frac{X \exp(1/A)}{A}}^{1} \rho(v) \, dv = \exp(-1/A), \qquad \int_{0}^{\frac{X \exp(1/A)}{A}} v \, \rho(v) \, dv -\frac{X}{A^2} = 0, \nonumber
\ee
which is clearly equivalent to Equation \eqref{eq.onebuyer.solution.2}.

Therefore, for any $\epsilon > 0$, when $k$ is sufficiently large and $\nu$ sufficiently small, we have $\Pi_1^* \geq \pi^* - \epsilon$.  When $\nu$ is sufficiently small, the revenue guarantee must be smaller than $\pi^* + \epsilon$ by the Roesler-Szentes construction.  This concludes the proof.

\subsubsection{Intuition on \autoref{prop.onebuyer}.}
\label{sec.intuition.onebuyer}
We note that as $k \rightarrow \infty$ and $a = A \cdot k$, the exponential price mechanism in Equation \eqref{eq.mechanism.I1} becomes:
\be
q(z) = z, \qquad P(z) = X (\exp(z/A) - 1),
\ee
where $z \equiv m_1/k \in [0, 1]$ is the demand of the buyer.

Fix an unbiased information structure $(S, G)$ for the buyer: $\E[v \mid s] =s$ for every $s \in S \subseteq [0,1]$, and $s \in S$ has the cumulative distribution function $G(s)$.

Given a realization of signal $s$, the buyer solves:
\[
\max_{z} s \cdot z - X (\exp(z/A) - 1).
\]
If $s \leq \pi^* \equiv X/A$, then the buyer's optimal demand is $z=0$, and he pays 0; if $s \geq s^* \equiv X \exp(1/A)/A$, then the buyer's optimal demand is $z=1$, and he pays $X (\exp(1/A) - 1))$.  If $s \in [\pi^*, s^*]$, then the optimal demand is given by the first order condition $s = X \exp(z/A)/A$, and the buyer pays $X (\exp(z/A) - 1) \rvert_{s = X \exp(z/A)/A} = A s - X$.  Thus, the equilibrium revenue under the unbiased information structure $(S, G)$ is:
\be
\Pi_1(G) \equiv \int_{s = \pi^*}^{1} \min(A s - X, X (\exp(1/A) - 1)) \, dG(s),
\ee
which is similar to $\Pi_1$ in Equation \eqref{eq.onebuyer.uniform.revenue}, but with a different lower limit in the integral.

In general, we have:
\be
\label{eq.Pi.G}
\Pi_1(G) \geq \int_{s = 0}^{1} \min(A s - X, X (\exp(1/A) - 1)) \, dG(s) \geq \int_{s = 0}^{1} \min(A s - X, X (\exp(1/A) - 1)) \, \rho(s) \, ds = \Pi_1,
\ee
since $\rho$ is a mean-preserving spread of $G$, and $\min(A s - X, X (\exp(z/A) - 1))$ is a concave function of $s$; thus, $\Pi_1$ is a lower bound on the equilibrium revenue over all information structures, confirming \autoref{prop.kmessages} when $I=1$.

The proof of \autoref{prop.onebuyer} shows that $\Pi_1=\Pi_1(G)$ when $G = G_{\pi^*}^{B^*}$ as constructed by Roesler and Szentes.  This can be seen in Equation \eqref{eq.Pi.G} as follows: the first inequality in \eqref{eq.Pi.G} is an equality when $G = G_{\pi^*}^{B^*}$ because $G_{\pi^*}^{B^*}$ is supported on the interval $[\pi^*, B^*]$; the second inequality in \eqref{eq.Pi.G} is an equality when $G = G_{\pi^*}^{B^*}$ because $\int_{s=\pi^*}^{s^*} s \, dG_{\pi^*}^{B^*}(s) = \int_{v=0}^{s^*} v \, \rho(v) \, dv$ and $G_{\pi^*}^{B^*}(s^*) = \int_{v=0}^{s^*} \rho(v) \, dv $ by Equation \eqref{eq.RS.conditions}.


\subsection{Proof of \autoref{prop.Iinf}}

For each $0 \leq n \leq I$, define
\be
\Rev_n(v) = \frac{a \, v}{k} \sum_{j=n+1}^I \frac{1}{j} + n X ((1+1/a)^k - 1) - (I-n) X,
\ee
which is $\Rev(v, m)$ for any $m \in M(n)$.  Let
\be
v(n) = \frac{(n+1) k X}{a} (1+1/a)^k,
\ee
 By construction, we have $\Rev_n(v(n)) = \Rev_{n+1}(v(n))$ for each $0 \leq n \leq I-1$.  Set $v(-1) = 0$ and $v(I)=\infty$.  Clearly, if $X>0$, then $\Rev_n(v) = \min_{0 \leq n' \leq I} \Rev_{n'}(v)$ if and only if $v \in [v(n-1), v(n)]$.

We want to approximate the identity function $v$ by $\min_{0 \leq n \leq I} \Rev_n(v)$.  To do so, we set $A = 1/\log(I)$, $X = 1/(2 I \log(I))$ and $a = A k$.  We have $\lim_{k \rightarrow \infty} (1+1/a)^k = I$, $\lim_{k \rightarrow \infty} v(1) = 1$.  Thus,
\begin{align}
\lim_{k \rightarrow \infty} \, \sum_{v \in V} p(v) \min_{0 \leq n \leq I} \Rev_{n}(v) = & \sum_{v \leq v(0), v \in V} p(v) \left( \frac{ v}{\log(I)} \sum_{j=1}^I \frac{1}{j} - \frac{1}{2 \log(I)} \right) \\
 & + \sum_{v > v(0), v \in V} p(v) \left( \frac{ v}{\log(I)} \sum_{j=2}^I \frac{1}{j} - \frac{1}{\log(I)} \right) \nonumber
\end{align}
Clearly, the above equation converges to $\sum_{v} v \cdot p(v)$ as $I \rightarrow \infty$.  This completes the proof.

\section{Generalization}
\label{sec.twobuyers}

Since they play a central role in the derivation of exponential price mechanism, we devote this section to better understand Conditions \eqref{eq.kmessages.condAlpha} and \eqref{eq.kmessages.condRev}.  We first show that to maximize the revenue guarantee it is without loss of generality to assume Condition \eqref{eq.kmessages.condAlpha}.  Given Condition \eqref{eq.kmessages.condAlpha}, we then fully characterize the implications of Condition \eqref{eq.kmessages.condRev}, which give a generalization of exponential price mechanism.


\subsection{Simplifying the Dual Variables}

In this subsection we show that, as long as $k$ is large, for the optimal revenue guarantee it is without loss of generality to focus on symmetric mechanism with $M_i = \{0, 1, \ldots, k \}$ and dual variables in Equation \eqref{eq.kmessages.condAlpha}, for every $i \in \mathcal{I}$.


\subsubsection*{Transition probability matrix}

Fix an arbitrary tuple $(q_i, P_i, \alpha_i)_{i \in \mathcal{I}}$.  Recall that there exists an opt-out message $0 \in M_i$ for every buyer $i$.  We define a new $(\tilde{q}_i, \tilde{P}_i, \tilde{\alpha}_i)_{i \in \mathcal{I}}$ as follows:

\begin{enumerate}[(i)]
\item Define
\be
a \equiv \max_{i \in \mathcal{I}} \max_{m_i \in M_i} \; \sum_{m_i' \neq m_i} \alpha_i(m_i' \mid m_i) + c,
\ee
where $c > 0$ is an arbitrary constant, and
\be
\mathcal{A}_i(m_i' \mid m_i) \equiv
\begin{cases}
\alpha_i(m_i' \mid m_i) / a & m_i' \neq m_i \\
1 - \sum_{m_i'' \neq m_i} \mathcal{A}_i(m_i'' \mid m_i) & m_i' = m_i
\end{cases}, \qquad (m_i, m_i') \in M_i^2
\ee
If we interpret $\alpha_i$ as the transition rates of a continuous-time Markov process over $M_i$, then $\mathcal{A}_i$ is a transition probability matrix embedded in the process.  Because of the positive constant $c$, $\mathcal{A}_i(m_i \mid m_i) > 0$ for every $m_i \in M_i$, so every $m_i$ is aperiodic in $\mathcal{A}_i$ (\citet*{StroockMarkov}, Section 3.1.3)

\item Define a new message space $\tilde{M}_i \equiv \{0, 1, \ldots, k\}$ for every buyer $i$.  Message 0 is still the opt-out message. Message $j \in \tilde{M}_i$ is the ``mixed-strategy'' message given by $(\mathcal{A}_i)^j(\, \cdot \mid 0) \in \Delta(M_i)$:
\begin{align}
\tilde{q}_i(j, m_{-i}) &\equiv \sum_{m_i \in M_i} (\mathcal{A}_i)^j(m_i \mid 0) \, q_i(m_i, m_{-i}), \qquad m_{-i} \in M_{-i}, \\
\tilde{P}_i(j, m_{-i}) &\equiv \sum_{m_i \in M_i} (\mathcal{A}_i)^j(m_i \mid 0) P_i(m_i, m_{-i}), \notag
\end{align}
where $(\mathcal{A}_i)^j$ is $\mathcal{A}_i$ raised to the $j$-th power (the $j$-step transition probability matrix).  Moreover, we extend $\tilde{q}_i(\tilde{m}_i, m_{-i})$ and $\tilde{P}_i(\tilde{m}_i, m_{-i})$ linearly to $\tilde{q}_i(\tilde{m}_i, \tilde{m}_{-i})$ and $\tilde{P}_i(\tilde{m}_i, \tilde{m}_{-i})$ for each $\tilde{m}_{-i} \in \tilde{M}_{-i}$.

\item Define
\be
\tilde{\alpha}_i(\tilde{m}_i' \mid \tilde{m}_i) \equiv
\begin{cases}
a & \tilde{m}_i' = \tilde{m}_i+1, \\
0 & \tilde{m}_i' \neq \tilde{m}_i+1.
\end{cases}
\ee

\end{enumerate}

Let $\tilde{U}_i(v, \tilde{m}) \equiv v \cdot \tilde{q}_i(\tilde{m}) - \tilde{P}_i(\tilde{m})$.  We have:
\begin{align}
& \sum_{m_i \in M_i} (\mathcal{A}_i)^j(m_i \mid 0) \sum_{m_i' \in M_i} [ U_i(v, (m_i', m_{-i}))  - U_i(v, (m_i, m_{-i})) ] \alpha_i(m_i' \mid m_i) \\
& = \sum_{m_i \in M_i} (\mathcal{A}_i)^j(m_i \mid 0) \sum_{m_i' \in M_i} a \cdot [ U_i(v, (m_i', m_{-i}))  - U_i(v, (m_i, m_{-i})) ] \mathcal{A}_i(m_i' \mid m_i) \notag \\
& = a \sum_{m_i' \in M_i} \left[ U_i(v, (m_i', m_{-i})) \, (\mathcal{A}_i)^{j+1}(m_i' \mid 0) - U_i(v, (m_i', m_{-i})) \, (\mathcal{A}_i)^j(m_i' \mid 0) \right] \notag \\
& = a \, [\tilde{U}_i(v, (j+1, m_{-i})) - \tilde{U}_i(v, (j, m_{-i}))] \notag \\
& \begin{cases}
= \tilde{\alpha}_i ( j+1 \mid j) \, [\tilde{U}_i(v, (j+1, m_{-i})) - \tilde{U}_i(v, (j, m_{-i}))] & j < k, \\
\leq \epsilon & j = k,
\end{cases} \notag
\end{align}
where in the last line, $k$ is chosen to be sufficiently large so that $a \, || (\mathcal{A}_i)^{k+1}(\, \cdot \mid 0) - (\mathcal{A}_i)^{k}(\, \cdot \mid 0) ||_{\text{v}} \cdot \max_{v, m} |U_i(v, m)| \leq \epsilon$, where $|| \cdot ||_{\text{v}}$ is the total variation norm, and $\lim_{k \rightarrow \infty} (\mathcal{A}_i)^{k}$ is well defined because every $m_i$ is aperiodic (\citet*{StroockMarkov}, Section 4.1.7).

Consequently, the virtual revenue $\tilde{\Rev}(v, \tilde{m})$ given by $(\tilde{P}_i, \tilde{q}_i, \tilde{\alpha}_i)_{i \in \mathcal{I}}$ satisfies
\be
\tilde{\Rev}(v, \tilde{m})
\begin{cases}
= \sum_{m \in M} \prod_i (\mathcal{A}_i)^{\tilde{m}_i}(m_i \mid 0) \cdot \Rev(v, m) \geq \min_{m \in M} \Rev(v, m) & \tilde{m}_i < k, \forall i \in \mathcal{I}, \\
\geq \sum_{m \in M} \prod_i (\mathcal{A}_i)^{\tilde{m}_i}(m_i \mid 0) \cdot \Rev(v, m) - I \cdot \epsilon \geq \min_{m \in M} \Rev(v, m) - I \cdot \epsilon & \text{otherwise}. \\
\end{cases}
\ee

\subsubsection*{Symmetric mechanism}

Now fix a mechanism $(q_i, P_i)_{i \in \mathcal{I}}$ such that $M_i = \{0, 1, \ldots, k \}$, for every $i \in \mathcal{I}$, and suppose Equation \eqref{eq.kmessages.condAlpha} holds for $\alpha_i$.  We symmetrize the mechanism:
\begin{align}
\tilde{q}_1(m_1', m_{-1}') &\equiv \frac{1}{I!} \sum_{\sigma} q_{\sigma(1)}(m_{\sigma(1)} = m_1', m_{\sigma(2)} = m_2', \ldots, m_{\sigma(I)} = m_I'), \qquad m' \in M \\
\tilde{P}_1(m_1', m_{-1}') &\equiv \frac{1}{I!} \sum_{\sigma} P_{\sigma(1)}(m_{\sigma(1)} = m_1', m_{\sigma(2)} = m_2', \ldots, m_{\sigma(I)} = m_I'), \notag
\end{align}
where we sum over all permutations $\sigma$ over $\{1, 2, \ldots, I\}$. Likewise for the allocations and payments of buyers $2, 3, \ldots, I$.  Since $\alpha_i$ is symmetric, the virtual revenue $\tilde{\Rev}(v, m)$ given by $(\tilde{q}_i, \tilde{P}_i, \alpha_i)_{i \in \mathcal{I}}$ is the average of $\Rev(v, (m_{\sigma(1)}, m_{\sigma(2)}, \ldots, m_{\sigma(I)}))$ over all permutations $\sigma$ and given by $(q_i, P_i, \alpha_i)_{i \in \mathcal{I}}$; thus, we have $\tilde{\Rev}(v, m) \geq \min_{m} \Rev(v, m)$.

\subsubsection*{Implications for Numerical Solution}
For a fixed value of $a$ in Equation \eqref{eq.kmessages.condAlpha}, maximizing the revenue guarantee over symmetric mechanism $(q, P)$ is a linear programming problem (cf.\ Problem \eqref{eq.maxmin.problem.2}).  By varying the one-dimensional variable $a$ and solving the corresponding linear programming problems, we have a tractable method to numerically solve for the optimal revenue guarantee.


\subsection{Generalized Exponential Price Mechanism}

Given Condition \eqref{eq.kmessages.condAlpha} on dual variables (which is without loss of generality by the previous subsection), Condition \eqref{eq.kmessages.condRev} is a natural assumption on the virtual revenue.  In this subsection we fully characterize the implications of Condition \eqref{eq.kmessages.condRev} for the revenue guarantee; from the characterization we arrive at a generalization of the exponential price mechanism.  We numerically demonstrate that the revenue guarantee of the exponential price mechanism is quite good within this class of generalization.  We also compare with the revenue guarantee of the first price auction and with some upper bound on the revenue guarantee.

For simplicity, suppose $I=2$.  We consider a symmetric mechanism with $k+1$ messages: $M_1=M_2=\{0, 1, \ldots, k \}$.  Our usual assumption on the mechanism is:
\be
\label{eq.kmessages.UsualConditions}
q(0, j) = 0 = P(0, j), \quad q(j,l) \geq 0, \quad q(j, l) + q(l, j) \leq 1,  \quad (j, l) \in \{0, 1, \ldots, k\}^2.
\ee


Assume Condition \eqref{eq.kmessages.condAlpha}. Condition \eqref{eq.kmessages.condRev} holds under $I=2$ if and only if
\begin{align}
\label{eq.kmessages.qsystem}
& 2 q(1,0) = q(j+1,l)-q(j,l) + q(l+1, j) - q(l, j), \quad (j, l) \in \{0, 1, \ldots, k-1\}^2, \nonumber \\
& q(1,k) = q(j+1,k)-q(j,k), \quad j \in  \{0, 1, \ldots, k-1\},
\end{align}
and
\begin{align}
\label{eq.kmessages.Psystem}
 - 2 a P(1,0) = P(j,l)+P(l,j) - a ( P(j+1,l) - P(j,l)) - a ( P(l+1, j) - P(l, j)), \qquad & \nonumber \\
 (j, l) \in \{0, 1, \ldots, k-1\}^2, & \nonumber \\
 P(k,0) - a P(1,k) = P(j,k) + P(k,j) - a (P(j+1,k)-P(j,k)), \quad  j \in  \{0, 1, \ldots, k-1\}. &
\end{align}

It is without loss to assume that the feasibility constraint $q(j,k) + q(k, j) \leq 1$ binds for every $j$ (if not, we can increase $q(k, j)$, which strictly increases $\Rev(v, (k-1, j))$, without decreasing any other $\Rev(v, m)$ or violating any feasibility constraint):
\be
\label{eq.kmessages.q.BindingConstr}
q(j,k) + q(k, j) = 1, \quad j \in \{0, 1, \ldots, k\}.
\ee

\begin{lemma}
\label{lemma.kmessages.q}
For any allocation $q$ that satisfies Conditions \eqref{eq.kmessages.qsystem} and \eqref{eq.kmessages.q.BindingConstr}, we have $q(1,0) = (3k+1)/(4 k^2)$ and $q(1,k) = 1 / (2 k)$.
\end{lemma}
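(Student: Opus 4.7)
The plan is to exploit the opt-out conditions together with the binding feasibility constraint to pin down the two values via two telescoping arguments. The value $q(1,k)$ will come from a one-dimensional telescoping of the second line of \eqref{eq.kmessages.qsystem}, while $q(1,0)$ will come from summing the first line of \eqref{eq.kmessages.qsystem} over the full grid $\{0,\dots,k-1\}^2$ so that both internal sums collapse.

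First I would handle $q(1,k)$. The identity $q(1,k) = q(j+1,k)-q(j,k)$ says that $j \mapsto q(j,k)$ has constant first differences on $\{0,\dots,k\}$, so combined with the opt-out condition $q(0,k)=0$ we get $q(j,k) = j \, q(1,k)$. Specializing at $j=k$ and applying \eqref{eq.kmessages.q.BindingConstr} with $j=k$ yields $2 q(k,k) = 1$, hence $q(k,k)=1/2$ and $q(1,k) = 1/(2k)$.

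Next I would pin down $q(1,0)$. Fix $l\in\{0,\dots,k-1\}$ and sum the first line of \eqref{eq.kmessages.qsystem} over $j=0,\dots,k-1$; the $j$-indexed difference telescopes to $q(k,l) - q(0,l) = q(k,l)$ using $q(0,l)=0$, while the $l$-indexed differences telescope in $l$ after a second summation. Concretely, summing the resulting identity
\[
2k \, q(1,0) \;=\; q(k,l) + \sum_{j=0}^{k-1} q(l+1,j) - \sum_{j=0}^{k-1} q(l,j)
\]
over $l=0,\dots,k-1$ collapses the inner sums (again using $q(0,j)=0$) and leaves $2k^2 q(1,0) = 2 \sum_{l=0}^{k-1} q(k,l)$. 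The binding constraint \eqref{eq.kmessages.q.BindingConstr} lets me replace $q(k,l)$ by $1 - q(l,k) = 1 - l/(2k)$, and a direct arithmetic evaluation of $\sum_{l=0}^{k-1}(1 - l/(2k)) = k - (k-1)/4 = (3k+1)/4$ gives $q(1,0) = (3k+1)/(4k^2)$.

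I do not anticipate a real obstacle: the key trick is simply to iterate the ``difference-of-allocation'' identity in both coordinates so that the unknown interior values $q(j,l)$ disappear and only boundary values, already determined by \eqref{eq.kmessages.q.BindingConstr} and the opt-out conditions, remain. The mild subtlety is that the first line of \eqref{eq.kmessages.qsystem} only holds for $l \leq k-1$, so when collapsing the $l$-sum one must be careful to reach the boundary value $S(k) = \sum_{j=0}^{k-1} q(k,j)$ via the telescoping $S(k) - S(0)$ rather than attempting to use \eqref{eq.kmessages.qsystem} at $l=k$ directly.
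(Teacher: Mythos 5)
Your proof is correct and takes essentially the same approach as the paper: derive $q(k,k)=1/2$ and the boundary row $q(j,k)=j\,q(1,k)$ from the binding constraint and the second line of \eqref{eq.kmessages.qsystem}, then sum the first line of \eqref{eq.kmessages.qsystem} over the full grid $\{0,\dots,k-1\}^2$ and telescope both coordinates to reduce $q(1,0)$ to $\sum_{l=0}^{k-1} q(k,l)$. The only difference is presentational: you unfold the double telescoping into two explicit summation steps, whereas the paper compresses it into the single identity $\sum_{j,l}\bigl(q(l+1,j)-q(l,j)\bigr)=k^2 q(1,0)$.
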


\begin{lemma}
\label{lemma.kmessages.P}
For any payment $P$ that satisfies Condition \eqref{eq.kmessages.Psystem}, we have:
\be
\label{eq.kmessages.Pkk}
P(k,k) = \left( (1 + 1/a)^k - 1 \right)^2 a P(1,0) + \left((1 +1/a)^k - 1 \right) (a P(1,k) - P(k,0) ).
\ee
\end{lemma}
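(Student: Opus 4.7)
The plan is to solve for $P(k,k)$ by forming two specific linear combinations of the equations in \eqref{eq.kmessages.Psystem}. Setting $r = 1 + 1/a$, so that $a(r-1) = 1$ and $\sum_{j=0}^{k-1} r^{k-1-j} = a(r^k - 1)$, I would first rewrite each equation as a forward difference of the form $P(j+1,\cdot) - r P(j,\cdot) = (\text{source})$. The first equation becomes
\[
[P(j+1, l) - r P(j, l)] + [P(l+1, j) - r P(l, j)] = 2 P(1, 0),
\]
and the second becomes
\[
P(j+1, k) - r P(j, k) = P(1, k) + \tfrac{1}{a}\bigl[P(k, j) - P(k, 0)\bigr].
\]

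Next, I would multiply the second equation by $r^{k-1-j}$ and sum over $j \in \{0, \ldots, k-1\}$. Using the opt-out boundary $P(0, k) = 0$, the left-hand side telescopes to $P(k, k)$, while the right-hand side, using $\sum_j r^{k-1-j} = a(r^k - 1)$, yields the identity
\[
P(k, k) = \tfrac{1}{a} \sum_{j=0}^{k-1} r^{k-1-j} P(k, j) + (r^k - 1)\bigl(a P(1, k) - P(k, 0)\bigr).
\]
So the remaining task is to evaluate the weighted sum $\sum_{j=0}^{k-1} r^{k-1-j} P(k, j)$.

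For this I would multiply the first equation by the product weight $r^{k-1-j}\cdot r^{k-1-l}$ and sum over $(j, l) \in \{0, \ldots, k-1\}^2$. By the same telescoping applied with one index held fixed (and again invoking $P(0, l) = 0$), each of the two forward-difference expressions sums to $\sum_{l=0}^{k-1} r^{k-1-l} P(k, l)$, so the left-hand side becomes $2\sum_{l=0}^{k-1} r^{k-1-l} P(k, l)$, while the right-hand side becomes $2 P(1, 0)\cdot (a(r^k - 1))^2$. Dividing by $2$ gives $\sum_{l=0}^{k-1} r^{k-1-l} P(k, l) = a^2 (r^k - 1)^2 P(1, 0)$, and substituting back into the displayed identity produces $P(k,k) = a(r^k - 1)^2 P(1,0) + (r^k - 1)(a P(1, k) - P(k, 0))$, which is the claim.

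Beyond these manipulations there is no substantive obstacle; the only subtlety is spotting the correct geometric weight $r^{k-1-j}$, but this is essentially forced by the structure of the forward difference $P(j+1, \cdot) - r P(j, \cdot)$ that appears naturally once one divides the original equations through by $-a$.
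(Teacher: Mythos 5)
Your proof is correct, and it takes a genuinely different and shorter route than the paper's. The paper proves the lemma by induction on $l$: it introduces the auxiliary quantity $X(l) = \sum_{j=1}^{l-1}(1+1/a)^{l-j}(P(l,j)-P(l,j-1))$, establishes a closed form for $X(l)$ via a somewhat intricate inductive step (each step invoking several instances of the constraint system, including the $\Rev(v,(\kappa,0))=\Rev(v,(1,0))$, $\Rev(v,(\kappa,\kappa))=\Rev(v,(1,0))$, and $\Rev(v,(\kappa-1,\kappa))=\Rev(v,(1,0))$ identities), and then specializes to $l=k$ and combines with a separately derived expression for $P(j,k)$. You instead rewrite each constraint as a forward difference $P(j+1,\cdot)-rP(j,\cdot)$ with $r=1+1/a$, apply geometric weights $r^{k-1-j}$ to telescope each ``column,'' and use the opt-out boundary $P(0,\cdot)=0$ to evaluate the telescopes exactly; the double-weighted sum over the first family of equations yields the single quantity $\sum_l r^{k-1-l}P(k,l)$ needed to close the computation. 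This avoids the induction entirely, replaces the paper's auxiliary $X(l)$ (a weighted sum of consecutive differences) with a simpler direct weighted sum, and makes transparent exactly which boundary conditions and which summation ranges are used. The only extra observation your argument needs, and which you state, is the closed form $\sum_{j=0}^{k-1}r^j = a(r^k-1)$, which is the discrete analogue of the role $\exp(1/A)$ plays in the continuum version; everything else is bookkeeping. The trade-off is that the paper's induction establishes $X(l)$ for every $l$ along the way, which could be reused, whereas your argument is a one-shot derivation of the $P(k,k)$ identity alone --- but for the stated lemma that is all that is required.
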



Define,
\be
\label{eq.kmessages.XY}
Y_0 \equiv 2 a P(1,0), \qquad Y_1 \equiv -P(k, 0) + a P(1, k),
\ee
i.e., $-Y_0$ is equal to the first line of \eqref{eq.kmessages.Psystem}, and $-Y_1$ is equal to the second line of \eqref{eq.kmessages.Psystem}.

\autoref{lemma.kmessages.q} and \autoref{lemma.kmessages.P} lead us unambiguously to the following problem:
\be
\label{eq.twobuyers.problem}
\Pi^\#_2 \equiv \sup_{k \geq 1, \, a \geq 0, \, Y_0, \, Y_1} \; \sum_v \min \left( \frac{3k+1}{2 k^2} a v - Y_0, \, \frac{a v}{2 k} - Y_1, \, Y_0 \left( (1 + 1/a)^k - 1 \right)^2 + 2 Y_1 \left( (1 +1/a)^k-1 \right)  \right)  p(v).
\ee


Comparing $\Pi_2^{\#}$ above with $\Pi_2^*$ in Equation \eqref{eq.kmessages.problem}, the main difference is that in $\Pi^\#_2$ there are two variables $Y_0$ and $Y_1$, instead of a single variable $X$ in $\Pi_2^*$; the difference between the coefficient of $\frac{3k+1}{2 k^2}$ in $\Pi_2^{\#}$ and of $\frac{3}{2 k}$ in  $\Pi_2^*$ is unimportant, since $k \rightarrow \infty$ in both maximization problems.  In fact, if $k \rightarrow \infty$, $\Pi_2^*$ is a special case of $\Pi_2^{\#}$ with $Y_0 = 2 X$ and $Y_1 = X - X ((1+1/a)^k-1)$.  Thus, we have $\Pi_2^{\#} \geq \Pi_2^*$ as $k \rightarrow \infty$.

\begin{proposition}
\label{prop.twobuyers}
Suppose there are two buyers.  There exists a symmetric mechanism that guarantees a revenue of $\Pi^\#_2$ defined in \eqref{eq.twobuyers.problem}.  
\end{proposition}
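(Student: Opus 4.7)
The plan is to mimic the approach used for \autoref{prop.kmessages}: for each feasible choice of $(k,a,Y_0,Y_1)$ in \eqref{eq.twobuyers.problem}, I construct a symmetric mechanism $(q,P)$ together with dual variables $\alpha$ of the form \eqref{eq.kmessages.condAlpha} such that the resulting tuple is feasible for Problem \eqref{eq.maxmin.problem.2} and the virtual revenue $\Rev(v,m)$ coincides, on each partition class $M(n)$ with $n\in\{0,1,2\}$, with one of the three affine-in-$v$ expressions inside the $\min$ in \eqref{eq.twobuyers.problem}. Since any feasible tuple for \eqref{eq.maxmin.problem.2} gives a lower bound on the mechanism's revenue guarantee, taking the supremum over $(k,a,Y_0,Y_1)$ then yields $\Pi^\#_2$ as a guaranteed revenue.

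Construction of $q$. With $\alpha$ as in \eqref{eq.kmessages.condAlpha}, the allocation part of Condition \eqref{eq.kmessages.condRev} is exactly \eqref{eq.kmessages.qsystem}. Adjoining the opt-out boundary $q(0,l)=0$ and the binding feasibility \eqref{eq.kmessages.q.BindingConstr} gives an overdetermined linear system in $(q(j,l))$; \autoref{lemma.kmessages.q} shows it is consistent, with forced starting values $q(1,0)=(3k+1)/(4k^2)$ and $q(1,k)=1/(2k)$, and iterating \eqref{eq.kmessages.qsystem} uniquely propagates these across the whole grid. I expect the main obstacle to be checking feasibility, namely $q(j,l)\geq 0$ and $q(j,l)+q(l,j)\leq 1$. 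My plan is to prove by induction on $j+l$ that each increment $q(j+1,l)-q(j,l)$ stays in $[0,1/k]$, so that telescoping together with the binding boundary condition \eqref{eq.kmessages.q.BindingConstr} controls both bounds simultaneously.

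Construction of $P$. The payment part of Condition \eqref{eq.kmessages.condRev} is \eqref{eq.kmessages.Psystem}, a linear difference equation in $j$ (for each fixed $l$) with constant forcing driven by $Y_0$ and $Y_1$. As illustrated in \autoref{footnote.P.diffEq}, its general solution is an explicit combination of the homogeneous term $(1+1/a)^j$ and a particular part linear in the forcing. Combined with the opt-out boundary $P(0,l)=0$ and the parameter identifications $Y_0=2aP(1,0)$ and $Y_1=aP(1,k)-P(k,0)$ from \eqref{eq.kmessages.XY}, this pins $P$ down in closed form across the grid, and \autoref{lemma.kmessages.P} fixes $P(k,k)$ to the value \eqref{eq.kmessages.Pkk}.

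Conclusion. Having imposed Condition \eqref{eq.kmessages.condRev}, the virtual revenue $\Rev(v,m)$ depends on $m$ only through the class $M(n)$ containing $m$. Substituting the constructed values of $q(1,0),q(1,k),P(1,0),P(k,0),P(1,k),P(k,k)$ into \eqref{eq.maxmin.rev} at one representative from each of $M(0),M(1),M(2)$ yields, respectively, $\tfrac{(3k+1)}{2k^2}av-Y_0$, $\tfrac{av}{2k}-Y_1$, and $Y_0((1+1/a)^k-1)^2+2Y_1((1+1/a)^k-1)$. Averaging $\min_n \Rev(v,m)$ against $p$ and taking the supremum over $(k,a,Y_0,Y_1)$ then exactly reproduces \eqref{eq.twobuyers.problem}, establishing the claim.
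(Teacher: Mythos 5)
Your overall strategy is the same as the paper's: fix $\alpha$ as in \eqref{eq.kmessages.condAlpha}, build $(q,P)$ satisfying \eqref{eq.kmessages.qsystem}--\eqref{eq.kmessages.q.BindingConstr} and \eqref{eq.kmessages.Psystem}, invoke \autoref{lemma.kmessages.q} and \autoref{lemma.kmessages.P}, and read off the three virtual-revenue expressions at representatives of $M(0),M(1),M(2)$. The computation in your last paragraph is correct.

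However, there is a genuine gap in the construction of $q$. You assert that ``iterating \eqref{eq.kmessages.qsystem} uniquely propagates'' the allocation across the grid, but the first line of \eqref{eq.kmessages.qsystem} only pins down the \emph{sum} $q(j+1,l)-q(j,l)+q(l+1,j)-q(l,j)$ for $j\neq l$; it does not determine the two increments separately, and symmetry of the mechanism does not force $q(j+1,l)=q(l,j+1)$. The system is underdetermined, and $q$ is not forced by the boundary conditions alone. \autoref{lemma.kmessages.q} pins down only the two scalars $q(1,0)$ and $q(1,k)$, not the whole array. To complete the proof you must actually \emph{choose} a solution and verify feasibility for that choice, which the paper does via the explicit rule \eqref{eq.kmessages.q} (increments $(2k+1)/(4k^2)$, $(3k+1)/(4k^2)$, $(4k+1)/(4k^2)$ depending on whether $j<l$, $j=l$, $j>l$). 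Your proposed feasibility check also fails as stated: you claim the increments lie in $[0,1/k]$, but the natural choice has increment $(4k+1)/(4k^2)>1/k$ when $j>l$, so the induction you sketch cannot close. The correct feasibility argument uses the explicit formula to bound $q(j,l)+q(l,j)$ directly (for $j<l<k$ one gets $q(j,l)+q(l,j)=\bigl(2j(2k+1)+3k+1+(l-j-1)(4k+1)\bigr)/(4k^2)\le 1$), not a uniform bound on each increment.
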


We now specify the mechanism for \autoref{prop.twobuyers}.  Consider the following allocation rule:
\begin{align}
\label{eq.kmessages.q}
& q(0, l) = 0, \qquad l \in \{ 0, 1, \ldots, k \}, \\
& q(j+1, l)-q(j,l) =
\begin{cases}
(2k+1) / (4 k^2) & j < l \\
(3k + 1)/(4 k^2) & j = l \\
(4k+1) / (4 k^2) & j > l
\end{cases}, \qquad (j, l) \in \{ 0, 1, \ldots, k-1\}^2,  \\
& q(j+1, k)-q(j,k) = 1 / (2k), \qquad j \in \{ 0, 1, \ldots, k-1 \}. \nonumber
\end{align}
It is easy to check that the above allocation rule satisfies the feasibility constraint, and Conditions \eqref{eq.kmessages.qsystem} and \eqref{eq.kmessages.q.BindingConstr}.  As $k \rightarrow \infty$, the above allocation rule becomes identical to the allocation rule of exponential price mechanism in Equation \eqref{eq.I.allocation}.


Given any values of $Y_0$ and $Y_1$, we can choose the following solution to Equation \eqref{eq.kmessages.Psystem}:
\begin{align}
\label{eq.kmessages.P}
& P(j,l)  - a ( P(j+1,l) - P(j,l)) =
\begin{cases}
- Y_0/2 & 0 \leq l < k, \\
- Y_1 - P(k,j) & l = k,
\end{cases}, \quad j \in \{0, 1, \ldots, k-1\},
\end{align}
which is equivalent to (see \autoref{footnote.P.diffEq}):
\begin{align}
\label{eq.kmessages.P.explicit}
P(j,l) = &
\begin{cases}
\left( (1+1/a)^j - 1 \right) \frac{Y_0}{2} & 0 \leq l < k \\
\left( (1+1/a)^j - 1 \right) \left( Y_1 + \left( (1+1/a)^k - 1 \right) \frac{Y_0}{2} \right)& l = k
\end{cases}, \qquad   (j,l) \in \{0, 1, \ldots, k\}^2.
\end{align}
The above payment rule is identical to the payment rule of exponential price mechanism in Equation \eqref{eq.I.payment}, except when the other player submits the boundary message $k$.  

\autoref{tab.rev} shows the revenue guarantees $\Pi^*_2$ and $\Pi^{\#}_2$ for various prior distributions\footnote{Distribution Beta$(b, c)$ has a p.d.f.\ of $v^{b-1} (1-v)^{c-1} \cdot \frac{\Gamma(b+c)}{\Gamma(b)\Gamma(c)}$ for $v \in [0,1]$.  Beta$(1, 1)$ is of course the uniform distribution. 
} as $\nu \rightarrow 0$ and compares them with the first price auction with two buyers, where the reserve price is chosen to maximize the revenue guarantee in the first price auction \citep*{BergemannBrooksMorris}.  We see that $\Pi_2^*$ and $\Pi_2^{\#}$ are generally very close, though $\Pi_2^{\#}$ is slightly better than $\Pi_2^*$ when the distribution is heavily concentrated among high values.  In these examples $\Pi_2^{\#}$ is always better than the optimal revenue guarantee from first price auction with reserve price.  In \autoref{tab.rev} we also include the optimal revenue guarantees from the exponential price mechanism in \autoref{prop.onebuyer} with one buyer. 


\begin{figure}[h!]
\caption{\label{tab.rev} Revenue guarantees from \autoref{prop.kmessages} ($\Pi^*_2$), \autoref{prop.twobuyers} ($\Pi^\#_2$), first price auction with optimal reserve price \citep*{BergemannBrooksMorris}, and \autoref{prop.onebuyer} ($\Pi^*_1$). }
\begin{center}
 \includegraphics{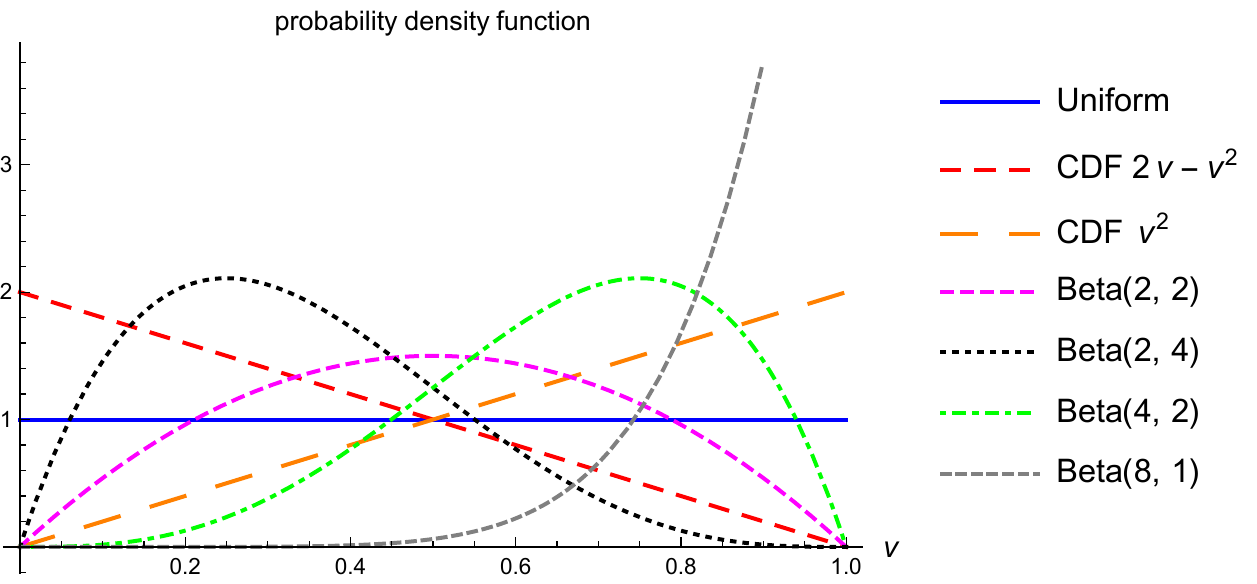}

\vspace{30pt}

\begin{tabular}{|c c | c c | c | c |}
\hline
Prior distribution & Mean & $\Pi_2^{\#}$ & $\Pi^*_2$ & FPA w/ optimal reserve & $\Pi^*_1$ \\
\hline
Uniform & 0.5 & 0.273 & 0.272 & 0.177 & 0.204 \\
\hline
CDF $2v-v^2$ & 0.3333 & 0.166 & 0.166 & 0.102 & 0.120 \\
\hline
CDF $v^2$ & 0.6667 & 0.437 & 0.431 & 0.346 & 0.341 \\
\hline
Beta$(2,2)$ & 0.5 & 0.302 & 0.301 & 0.230 & 0.229 \\
\hline
Beta$(2,4)$ & 0.3333 &  0.188 & 0.188 & 0.139 & 0.140 \\
\hline
Beta$(4,2)$ & 0.6667 & 0.475 & 0.463 & 0.414 & 0.381 \\
\hline
Beta$(8,1)$ & 0.8889 & 0.751 & 0.710 & 0.716 & 0.652 \\
\hline
\end{tabular}
\end{center}
\end{figure}

Finally, consider the following ``wallet game'' information structure: buyer $i$ privately observes signal $s_i$, $i = 1, 2$, where $s_i$ has the uniform distribution on $[0,1]$.  The common value is $v = (s_1+s_2)/2$.  Thus, the prior of the common value is the ``triangle'' distribution.  Given this information structure, the optimal mechanism is a direct mechanism that assigns the good to buyer 1 if $s_1 \geq s_2$ and $s_1+s_2/2 \geq 1/2$; assigns to buyer 2 if $s_2 > s_1$ and $s_2+s_1/2 \geq 1/2$; and does not assign the good otherwise.  (The virtual value of buyer $i$ is $s_i - 1/2 + s_{j}/2$, $j \neq i$.)  The payment rule is given by Myerson's Lemma and makes this mechanism incentive compatible.  The expected revenue of this optimal mechanism is $0.3611$.\footnote{We compute: $2 \cdot \int_{s_1=0}^{1} \int_{s_2=0}^{s_1} \max(s_1+s_2/2 - 1/2, 0) \, d s_2 \, d s_1 = 13/36 \approx 0.3611.$}  Thus, if the prior of the common value is the triangle distribution, $0.3611$ is an (not necessarily tight) upper bound on the revenue guarantee of any mechanism.  We compute for the triangle distribution: $\Pi^*_2 = 0.31094$ and $\Pi^{\#}_2 = 0.31324$, which are $86\%$ of this upper bound.

\section{Conclusion}

We propose a new class of mechanisms (the exponential price mechanisms) to sell a common value good.  The mechanisms are simple and practical, and can guarantee a good revenue over all information structures and equilibria.  The revenue guarantee is provably optimal when there is one buyer, and converges to the full surplus as the number of buyers tends to infinity.  To derive these mechanisms we introduce a linear programming duality approach, which we believe is useful for other robust mechanism design problems, e.g., for studying the revenue guarantee when buyers have both common and private values.

\newpage

\appendix

\noindent {\LARGE\bf Appendix}

\section{Proofs for \autoref{sec.twobuyers}}

\begin{proof}[Proof of \autoref{lemma.kmessages.q}]
By \eqref{eq.kmessages.q.BindingConstr} we have $q(k, k) = 1/2$.  By the second line of \eqref{eq.kmessages.qsystem} this implies that $q(j,k) = j/(2k)$ and $q(k, j) = 1 - j/(2k)$, $j = 0, 1, \ldots, k$.
Then we have
\begin{align}
k - \frac{(k-1) k}{4 k} = \sum_{j=0}^{k-1} q(k, j) = \sum_{j=0}^{k-1} \sum_{l=0}^{k-1} q(l+1, j) - q(l, j) = k^2 q(1,0)
\end{align}
where the last equality follows from the first line of \eqref{eq.kmessages.qsystem}.  Thus, $q(1,0)=(3k + 1)/(4 k^2)$.

%
%
%

\end{proof}

\begin{proof}[Proof of \autoref{lemma.kmessages.P}]
Fix an arbitrary $P$ that satisfies Condition \eqref{eq.kmessages.Psystem}.

From the second line of \eqref{eq.kmessages.Psystem} ($\Rev(v, (j-1, k)) = \Rev(v, (j,k))$), we have
\be
\label{eq.kmessages.P.secondline}
P(j+1, k) - P(j,k) = ( 1 + 1/a ) (P(j,k) - P(j-1, k)) + ( P(k,j) - P(k,j-1) ) / a,
\ee
for $j = 1, 2, \ldots, k-1$.  Equation \eqref{eq.kmessages.P.secondline} implies that
\be
P(j+1,k) - P(j,k) = (1+1/a)^{j} P(1, k) + \sum_{j'=1}^{j} (1+1/a)^{j - j'} (P(k,j') - P(k,j'-1)) / a,
\ee
and as a consequence, for any $j = 0, 1, \ldots, k$:
\be
\label{eq.kmessages.Pjk}
P(j,k) = a ( (1+1/a)^j - 1) P(1,k) + \sum_{j'=1}^{j-1} ( (1+1/a)^{j-j'} - 1 ) (P(k,j') - P(k,j'-1)).
\ee

We claim that
\begin{align}
X(l) & \equiv \sum_{j=1}^{l-1} (1+1/a)^{l-j} (P(l,j) - P(l,j-1)) \nonumber \\
& = P(l,l-1) + a ( (1+1/a)^l - 1 )^2 P(1,0) - (1+1/a)^l P(l,0),
\label{eq.kmessages.P.induction}
\end{align}
for every $l=1, 2, \ldots, k$.  Equation \eqref{eq.kmessages.P.induction} for $l=k$ and Equation \eqref{eq.kmessages.Pjk} together imply Equation \eqref{eq.kmessages.Pkk}, which proves the lemma.

Clearly, \eqref{eq.kmessages.P.induction} is true for $l=1$.  Suppose \eqref{eq.kmessages.P.induction} is true for $l=\kappa < k$ as an induction hypothesis; we prove that this implies \eqref{eq.kmessages.P.induction} is true for $l=\kappa+1$.

From $\Rev(v, (\kappa, j-1)) = \Rev(v, (\kappa, j))$ we have:
\begin{align}
& P(\kappa+1, j) - P(\kappa+1, j-1) \\
= \, & (1+1/a) ( P(\kappa, j) - P(\kappa, j-1) ) + (1 + 1/a) ( P(j, \kappa) - P(j-1, \kappa) ) \nonumber \\
& - ( P(j+1, \kappa) - P(j, \kappa) ), \nonumber
\end{align}
summing the above equation across $j=1, 2, \ldots, \kappa-1$ gives:
\begin{align}
& \sum_{j=1}^{\kappa-1} (1+1/a)^{\kappa +1 -j} (P(\kappa+1, j) - P(\kappa+1, j-1)) \\
= \, & \sum_{j=1}^{\kappa-1} (1+1/a)^{\kappa + 2 -j}  ( P(\kappa, j) - P(\kappa, j-1) ) + \sum_{j=1}^{\kappa-1} (1+1/a)^{\kappa + 2 -j} ( P(j, \kappa) - P(j-1, \kappa) ) \nonumber \\
& - \sum_{j=1}^{\kappa-1} (1+1/a)^{\kappa + 1 -j} ( P(j+1, \kappa) - P(j, \kappa) )  \nonumber \\
= \, & \sum_{j=1}^{\kappa-1} (1+1/a)^{\kappa + 2 -j}  ( P(\kappa, j) - P(\kappa, j-1) ) + (1+1/a)^{\kappa+1} P(1,\kappa) - (1+1/a)^2 (P(\kappa, \kappa) - P(\kappa-1, \kappa)). \nonumber
\end{align}
That is,
\begin{align}
& X(\kappa+1) \\
= \, & \sum_{j=1}^{\kappa-1} (1+1/a)^{\kappa + 2 -j}  ( P(\kappa, j) - P(\kappa, j-1) ) + (1+1/a)^{\kappa+1} P(1,\kappa) \nonumber \\
& - (1+1/a)^2 (P(\kappa, \kappa) - P(\kappa-1, \kappa)) + (1+1/a) (P(\kappa+1, \kappa) - P(\kappa+1, \kappa-1)) \nonumber \\
= \, & (1+1/a)^2 [P(\kappa, \kappa-1) + a ( (1+1/a)^\kappa - 1 )^2 P(1,0) - (1+1/a)^\kappa P(\kappa,0)] + (1+1/a)^{\kappa+1} P(1,\kappa) \nonumber \\
& - (1+1/a)^2 (P(\kappa, \kappa) - P(\kappa-1, \kappa)) + (1+1/a) (P(\kappa+1, \kappa) - P(\kappa+1, \kappa-1)), \nonumber
\end{align}
where in the last equality we have used the induction hypothesis \eqref{eq.kmessages.P.induction} for $l=\kappa$.

From $\Rev(v, (\kappa, 0)) = \Rev(v, (1,0))$ we have $(1+1/a) P(\kappa, 0) - P(1, \kappa) = P(\kappa+1, 0) - 2 P(1,0)$. Therefore, the previous equation is equivalent to:
\begin{align}
 & X(\kappa+1) \\
 = \, & (1+1/a)^2 P(\kappa, \kappa-1) + a (1+1/a)^2 ( (1+1/a)^\kappa - 1 )^2 P(1,0) - (1+1/a)^{\kappa+1} (P(\kappa+1,0) - 2 P(1,0)) \nonumber \\
& - (1+1/a)^2 (P(\kappa, \kappa) - P(\kappa-1, \kappa)) + (1+1/a) (P(\kappa+1, \kappa) - P(\kappa+1, \kappa-1)) \nonumber \\
= \, & (1+1/a)^2 P(\kappa, \kappa-1) + [a (1+1/a)^2 ( (1+1/a)^\kappa - 1 )^2 + 2 (1+1/a)^{\kappa+1}]  P(1,0) - (1+1/a)^{\kappa+1} P(\kappa+1,0) \nonumber \\
& - (1+1/a)^2 (P(\kappa, \kappa) - P(\kappa-1, \kappa)) + (1+1/a) (P(\kappa+1, \kappa) - P(\kappa+1, \kappa-1)) \nonumber
\end{align}

From $\Rev(v, (\kappa, \kappa)) = \Rev(v, (1,0))$ we have $(1+1/a) P(\kappa, \kappa) - P(\kappa+1, \kappa) = - P(1,0)$. Therefore, the previous equation is equivalent to:
\begin{align}
 & X(\kappa+1) \\
= \, & (1+1/a)^2 P(\kappa, \kappa-1) + [a (1+1/a)^2 ( (1+1/a)^\kappa - 1 )^2 + 2 (1+1/a)^{\kappa+1} + (1+1/a) ]  P(1,0) \nonumber \\
& - (1+1/a)^{\kappa+1} P(\kappa+1,0) + (1+1/a)^2  P(\kappa-1, \kappa) - (1+1/a)  P(\kappa+1, \kappa-1). \nonumber
\end{align}
From $\Rev(v, (\kappa-1, \kappa)) = \Rev(v, (1,0))$ we have $(1+1/a) P(\kappa, \kappa-1) + (1+1/a) P(\kappa-1, \kappa)
- P(\kappa+1, \kappa-1) = P(\kappa, \kappa) - 2 P(1,0)$, Therefore, the previous equation is equivalent to:
\begin{align}
 & X(\kappa+1) \\
= \, &  [a (1+1/a)^2 ( (1+1/a)^\kappa - 1 )^2 + 2 (1+1/a)^{\kappa+1} - (1+1/a) ]  P(1,0) \nonumber \\
& - (1+1/a)^{\kappa+1} P(\kappa+1,0) + (1+1/a) P(\kappa, \kappa). \nonumber
\end{align}
Finally, using $(1+1/a) P(\kappa, \kappa) - P(\kappa+1, \kappa) = - P(1,0)$ again we get:
\begin{align}
 & X(\kappa+1) \\
= \, &  [a (1+1/a)^2 ( (1+1/a)^\kappa - 1 )^2 + 2 (1+1/a)^{\kappa+1} - (1+1/a) - 1 ]  P(1,0) \nonumber \\
& - (1+1/a)^{\kappa+1} P(\kappa+1,0) + P(\kappa+1, \kappa). \nonumber
\end{align}

Since $a (1+1/a)^2 ( (1+1/a)^\kappa - 1 )^2 + 2 (1+1/a)^{\kappa+1} - (1+1/a) - 1 = a ( (1+1/a)^{\kappa+1} - 1)^2$, this proves \eqref{eq.kmessages.P.induction} when $l=\kappa+1$.
\end{proof}

\clearpage
\singlespacing
\bibliographystyle{ecta}
\bibliography{../RobustReference}

\begin{thebibliography}{19}
\newcommand{\enquote}[1]{``#1''}
\expandafter\ifx\csname natexlab\endcsname\relax\def\natexlab#1{#1}\fi

\bibitem[\protect\citeauthoryear{Bergemann, Brooks, and Morris}{Bergemann
  et~al.}{2016}]{BergemannBrooksMorris}
\textsc{Bergemann, D., B.~Brooks, and S.~Morris} (2016): \enquote{First Price
  Auctions with General Information Structures: Implications for Bidding and
  Revenue,} Working paper.

\bibitem[\protect\citeauthoryear{Bergemann and Morris}{Bergemann and
  Morris}{2016}]{BergemannMorrisTE}
\textsc{Bergemann, D. and S.~Morris} (2016): \enquote{Bayes Correlated
  Equilibrium and The Comparison of Information Structures in Games,}
  \emph{Theoretical Economics}.

\bibitem[\protect\citeauthoryear{Brooks}{Brooks}{2013}]{Brooks2013}
\textsc{Brooks, B.} (2013): \enquote{Surveying and Selling: Belief and Surplus
  Extraction in Auctions,} Working paper.

\bibitem[\protect\citeauthoryear{Carrasco, {Farinha Luz}, Monteiro, and
  Moreira}{Carrasco et~al.}{2015}]{Carrasco_etal}
\textsc{Carrasco, V., V.~{Farinha Luz}, P.~Monteiro, and H.~Moreira} (2015):
  \enquote{Robust Selling Mechanisms,} Working paper.

\bibitem[\protect\citeauthoryear{Carroll}{Carroll}{2015}]{CarrollAER}
\textsc{Carroll, G.} (2015): \enquote{Robustness and Linear Contracts,}
  \emph{American Economic Review}, 105, 536--63.

\bibitem[\protect\citeauthoryear{Carroll}{Carroll}{2016}]{Carroll_AdverseSelection}
---\hspace{-.1pt}---\hspace{-.1pt}--- (2016): \enquote{Informationally Robust
  Trade and Limits to Contagion,} \emph{Journal of Economic Theory},
  forthcoming.

\bibitem[\protect\citeauthoryear{Chen and Li}{Chen and Li}{2016}]{ChenLi}
\textsc{Chen, Y.-C. and J.~Li} (2016): \enquote{Revisiting the Foundations of
  Dominant-Strategy Mechanisms,} Working paper.

\bibitem[\protect\citeauthoryear{Chung and Ely}{Chung and Ely}{2007}]{ChungEly}
\textsc{Chung, K.-S. and J.~C. Ely} (2007): \enquote{Foundations of
  Dominant-Strategy Mechanisms,} \emph{The Review of Economic Studies}, 74,
  447--476.

\bibitem[\protect\citeauthoryear{Cr\'{e}mer and McLean}{Cr\'{e}mer and
  McLean}{1985}]{CremerMcLean_1985Ecta}
\textsc{Cr\'{e}mer, J. and R.~P. McLean} (1985): \enquote{Full Extraction of
  the Surplus in Bayesian and Dominant Strategy Auctions,} \emph{Econometrica},
  53, 345--361.

\bibitem[\protect\citeauthoryear{Cr\'{e}mer and McLean}{Cr\'{e}mer and
  McLean}{1988}]{CremerMcLean_1988Ecta}
---\hspace{-.1pt}---\hspace{-.1pt}--- (1988): \enquote{Optimal Selling
  Strategies under Uncertainty for a Discriminating Monopolist when Demands are
  Interdependent,} \emph{Econometrica}, 56, 1247--1257.

\bibitem[\protect\citeauthoryear{Engelbrecht-Wiggans, Milgrom, and
  Weber}{Engelbrecht-Wiggans et~al.}{1983}]{Engelbrecht-WiggansMilgromWeber}
\textsc{Engelbrecht-Wiggans, R., P.~R. Milgrom, and R.~J. Weber} (1983):
  \enquote{Competitive bidding and Proprietary Information,} \emph{Journal of
  Mathematical Economics}, 11, 161--169.

\bibitem[\protect\citeauthoryear{Frankel}{Frankel}{2014}]{FrankelAER}
\textsc{Frankel, A.} (2014): \enquote{Aligned Delegation,} \emph{American
  Economic Review}, 104, 66--83.

\bibitem[\protect\citeauthoryear{Hartline and Roughgarden}{Hartline and
  Roughgarden}{2016}]{HartlineRoughgarden_2016}
\textsc{Hartline, J. and T.~Roughgarden} (2016): \enquote{Optimal Platform
  Design,} Working paper.

\bibitem[\protect\citeauthoryear{Myerson}{Myerson}{1997}]{Myerson1997}
\textsc{Myerson, R.~B.} (1997): \enquote{Dual Reduction and Elementary Games,}
  \emph{Games and Economic Behavior}, 21, 183–--202.

\bibitem[\protect\citeauthoryear{Roesler and Szentes}{Roesler and
  Szentes}{2016}]{RoeslerSzentes}
\textsc{Roesler, A.-K. and B.~Szentes} (2016): \enquote{Buyer-Optimal Learning
  and Monopoly Pricing,} Working paper.

\bibitem[\protect\citeauthoryear{Stroock}{Stroock}{2013}]{StroockMarkov}
\textsc{Stroock, D.} (2013): \emph{Introduction to Markov Processes}, Springer,
  2 ed.

\bibitem[\protect\citeauthoryear{Wilson}{Wilson}{1987}]{Wilson1987}
\textsc{Wilson, R.} (1987): \enquote{Game-Theoretic Analyses of Trading
  Processes,} in \emph{Advances in Economic Theory: Fifth World Congress}, ed.
  by T.~Bewley, Cambridge University Press, 33--70.

\bibitem[\protect\citeauthoryear{Yamashita}{Yamashita}{2015}]{YamashitaRestud}
\textsc{Yamashita, T.} (2015): \enquote{Implementation in Weakly Undominated
  Strategies: Optimality of Second-Price Auction and Posted-Price Mechanism,}
  \emph{The Review of Economic Studies}, 82, 1223--1246.

\bibitem[\protect\citeauthoryear{Yamashita}{Yamashita}{2016}]{YamashitaRevenueGuarantee}
---\hspace{-.1pt}---\hspace{-.1pt}--- (2016): \enquote{Revenue Guarantee in
  Auction with Common Prior,} Working paper.

\end{thebibliography}

\end{document}